\documentclass{llncs}

\usepackage[T1]{fontenc}
\usepackage[utf8]{inputenc}
\usepackage{microtype}
\usepackage{hyperref}
\usepackage{amssymb,amsmath,enumerate}
\usepackage{comment}
\usepackage{tkz-graph,caption}

\newcommand{\NP}{{\sf NP}}

\newcommand{\ssi}{\subseteq_i}

\pagestyle{plain}

\DeclareMathOperator{\cw}{cw}

\title{Bounding Clique-width via Perfect Graphs
\thanks{The research in this paper was supported by EPSRC (EP/K025090/1).
The second author is grateful for the generous support of the Graduate (International) Research Travel Award from Simon Fraser University and Dr. Pavol Hell's NSERC Discovery Grant.}}
\author{Konrad K. Dabrowski\inst{1} \and Shenwei Huang\inst{2} \and Dani\"el Paulusma\inst{1}}
\institute{
School of Engineering and  Computing Sciences, Durham University,\\
Science Laboratories, South Road, Durham DH1 3LE, United Kingdom
\texttt{\{konrad.dabrowski,daniel.paulusma\}@durham.ac.uk}
\and
School of Computing Science, Simon Fraser University,\\
8888 University Drive, Burnaby B.C., V5A 1S6, Canada\\
\texttt{shenweih@sfu.ca}
}

\begin{document}
\maketitle
\setcounter{footnote}{0}

\begin{abstract}
Given two graphs $H_1$ and $H_2$, a graph $G$ is $(H_1,H_2)$-free if it contains no subgraph isomorphic to $H_1$ or $H_2$.
We continue a  recent study into the clique-width of $(H_1,H_2)$-free graphs and  present three new classes of $(H_1,H_2)$-free graphs
of bounded clique-width
and one of unbounded clique-width.
The four new graph classes have in common that one of their two forbidden induced subgraphs is the diamond (the graph obtained from a clique on four vertices by deleting one edge).
To prove boundedness of clique-width for the first three cases we develop a technique  based on bounding clique covering number in combination with
reduction to subclasses of perfect graphs.
We extend our proof of unboundedness for the fourth case to show that {\sc Graph Isomorphism} is {\sc Graph Isomorphism}-complete on the same graph class.
We also show the implications of our results for the computational complexity of the {\sc Colouring} problem restricted to $(H_1,H_2)$-free graphs.

\keywords{clique-width, forbidden induced subgraphs, graph class}
\end{abstract}

\section{Introduction}\label{s-intro}
Clique-width is a well-known graph parameter and its properties are well studied;
see for example the surveys of Gurski~\cite{Gu07} and Kami\'nski, Lozin and Milani\v{c}~\cite{KLM09}.
Computing the clique-width of a given graph is
\NP-hard, as shown by Fellows, Rosamond, Rotics and Szeider~\cite{FRRS09}.
Nevertheless, many \NP-complete graph problems are solvable in polynomial time on graph classes of {\it bounded} clique-width, that is, classes
in which the clique-width of each of its graphs is at most~$c$ for some constant~$c$.
This follows by combining the fact that if a graph $G$ has clique-width at most $c$ then a so-called $(8^c-1)$-expression for $G$ can be found in polynomial time~\cite{Oum08} together with a number of results~\cite{CMR00,KR03b,Ra07}, which show that if a $q$-expression is provided for some fixed $q$ then certain classes of problems can be solved in polynomial time.
A well-known example of such a problem is the {\sc Colouring} problem, which is that of testing whether the vertices of a graph can be coloured with at most $k$ colours such that no two adjacent vertices are coloured alike.  Due to these algorithmic implications, it is natural to research
whether the clique-width of a given graph class is bounded.

It should be noted that having bounded clique-width is a more general property than having bounded tree-width, that is, every graph class of bounded treewidth has bounded clique-width but the reverse is not true~\cite{CR05}.
Clique-width is also closely related to other graph width parameters, e.g. for any class, having bounded clique-width is equivalent to having bounded rank-width~\cite{OS06} and also equivalent to having bounded NLC-width~\cite{Johansson98}.
Moreover, clique-width has been studied in relation to graph operations, such as edge or vertex deletions, edge subdivisions and edge contractions.
For instance, a recent result of Courcelle~\cite{Co14} solved an open problem of Gurski~\cite{Gu07} by proving that if~${\cal G}$ is the class of graphs of clique-width~3 and~${\cal G}'$ is the class of graphs obtained from graphs in~${\cal G}$ by applying one or more edge contraction operations then~${\cal G}'$ has unbounded clique-width.

The classes that we consider in this paper consist of graphs that can be characterized by a family $\{H_1,\ldots,H_p\}$ of forbidden induced subgraphs (such graphs are
said to be $(H_1,\ldots,H_p)$-free).
The clique-width of such graph classes has been extensively studied in the literature
(e.g.~\cite{BL02,BGMS14,BELL06,BKM06,BK05,BLM04b,BLM04,BM02,BM03,DGP14,DP14,GR99b,LR04,LR06,LV08,MR99}).
It is straightforward to verify that the class of $H$-free graphs has bounded clique-width if and only if $H$ is an induced subgraph of the 4-vertex path $P_4$
(see also~\cite{DP15}). Hence, Dabrowski and Paulusma~\cite{DP15} investigated for which pairs $(H_1,H_2)$ the class of $(H_1,H_2)$-free graphs has bounded clique-width.
In this paper we solve a number of the open cases.
The underlying research question is:

\medskip
\noindent
{\it What kinds of properties of a graph class ensure that its clique-width is bounded?}

\medskip
\noindent
As such, our paper is to be interpreted as a further step towards this direction.
In particular, we believe there is a clear motivation for our type of research, in which new graph classes of bounded clique-width are identified, because
it may lead to a better understanding of the notion of clique-width. It should be noted that clique-width is one of the most difficult graph parameters to deal with.
To illustrate this, no polynomial-time algorithms are known for computing the clique-width of very restricted graph classes, such as unit interval graphs,
or for deciding whether a graph has clique-width at most $c$ for any fixed $c\geq 4$ (as an aside, such an algorithm does exist for $c=3$~\cite{CHLRR12}).

Rather than coming up with ad hoc techniques for solving specific cases, we aim to develop more general techniques for attacking a number of the open cases simultaneously.
Our technique in this paper is obtained by generalizing an approach followed in the literature. In order to illustrate  this approach with some examples, we first need to
introduce some notation
 (see Section~\ref{s-preliminaries} for all other terminology).

\medskip
\noindent
{\bf Notation.}
The disjoint union $(V(G)\cup V(H), E(G)\cup E(H))$
 of two vertex-disjoint graphs~$G$ and $H$ is denoted by $G+H$ and the disjoint union of $r$ copies of a graph $G$ is denoted by~$rG$.
The  complement of a graph $G$, denoted by $\overline{G}$, has vertex set $V(\overline{G})=V(G)$ and an edge between two distinct vertices
if and only if these vertices are not adjacent in $G$. The graphs $C_r, K_r$ and $P_r$ denote the cycle, complete graph and path on $r$ vertices, respectively.
The graph $\overline{2P_1+P_2}$ is called the {\it diamond}.
The graph $K_{1,3}$ is the 4-vertex star, also called the {\it claw}.
For $1\leq h\leq i\leq j$, let $S_{h,i,j}$ be the {\it subdivided claw} whose three edges are subdivided $h-1$, $i-1$ and $j-1$ times, respectively; note that $S_{1,1,1}=K_{1,3}$.

\medskip
\noindent
{\bf Our technique.}
Dabrowski and Paulusma~\cite{DP14} determined all graphs $H$ for which the class of $H$-free bipartite graphs has bounded clique-width. Such a classification turns out to also be useful
for proving boundedness of the clique-width for other graph classes.
For instance, in order to prove that $(\overline{P_1+P_3},P_1+S_{1,1,2})$-free graphs have bounded clique-width, the given graphs
were first reduced to $(P_1+S_{1,1,2})$-free bipartite graphs~\cite{DP15}.
In a similar way, Dabrowski, Lozin, Raman and Ries~\cite{DLRR12}  proved that $(K_3,K_{1,3}+K_2)$-free graphs and $(K_3,S_{1,1,3})$-free have bounded clique-width by
reducing to a subclass of bipartite graphs.
Note that bipartite graphs are perfect graphs. This motivated us to develop a technique based on perfect graphs that are not necessarily bipartite. In order to so,
we need to combine this approach with an additional tool. This tool is based on the following observation. If the vertex set of a graph can be partitioned into a small number of cliques
and the edges between them are sufficiently sparse, then the clique-width is bounded (see also Lemma~\ref{lem:struct2}).
Our technique can be summarized as follows.

\begin{enumerate}[1.]
\item Reduce the input graph to a graph that is in some subclass of perfect graphs;
\item While doing so, bound the clique covering number of the input graph.
\end{enumerate}
Another well-known subclass of perfect graphs is the class of chordal graphs.
We show that besides the class of bipartite graphs, the class of chordal graphs
and the class of perfect graphs itself may be used for Step~1.\footnote{To exploit this further, we recently worked on a
characterization of the boundedness of clique-width for classes of $H$-free chordal graphs and $H$-free perfect graphs~\cite{BDHP15}.  For this paper, however, we rely only on
existing results from the literature.}
We explain Steps~1-2 of our technique in detail in Section~\ref{s-clique}.

\medskip
\noindent
{\bf Our results.}
In this paper, we investigate whether our technique can be used to find new pairs $(H_1,H_2)$ for which the clique-width of $(H_1,H_2)$-free graphs is bounded.
We show that this is indeed the case. By applying our technique, we are able to present three new classes of
$(H_1,H_2)$-free graphs of bounded clique-width.\footnote{We do not specify our upper bounds as this would complicate our proofs for negligible gain. This is because in our proofs we apply graph operations that exponentially increase the upper bound of the clique-width, which means that the bounds that could be obtained from our proofs would be very large and far from being tight.}
By modifying walls via graph operations that preserve unboundedness of clique-width, we are also able to present a new class
of $(H_1,H_2)$-free graphs of unbounded clique-width. Combining our results leads to the following theorem (see also \figurename~\ref{fig:diamond-graphs}).

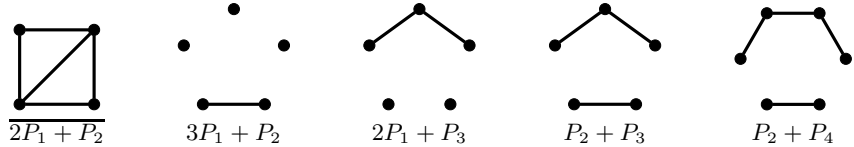
\begin{figure}
\begin{center}
\setlength{\tabcolsep}{1.5em}
\begin{tabular}{ccccc}
\scalebox{0.7}
{\begin{tikzpicture}[scale=1,rotate=45]
\renewcommand*{\EdgeLineWidth}{1.7pt}
\GraphInit[vstyle=Simple]
\SetVertexSimple[MinSize=6pt]
\Vertices{circle}{a,b,c,d}
\Edges(a,b,c,d,a,c)
\end{tikzpicture}}
&
\scalebox{0.7}
{\begin{tikzpicture}[scale=1,rotate=90]
\renewcommand*{\EdgeLineWidth}{1.7pt}
\GraphInit[vstyle=Simple]
\SetVertexSimple[MinSize=6pt]
\Vertices{circle}{a,b,c,d,e}
\Edge(c)(d)
\end{tikzpicture}}
&
\scalebox{0.7}
{\begin{tikzpicture}[scale=1,rotate=90]
\renewcommand*{\EdgeLineWidth}{1.7pt}
\GraphInit[vstyle=Simple]
\SetVertexSimple[MinSize=6pt]
\Vertices{circle}{a,b,c,d,e}
\Edges(e,a,b)
\end{tikzpicture}}
&
\scalebox{0.7}
{\begin{tikzpicture}[scale=1,rotate=90]
\renewcommand*{\EdgeLineWidth}{1.7pt}
\GraphInit[vstyle=Simple]
\SetVertexSimple[MinSize=6pt]
\Vertices{circle}{a,b,c,d,e}
\Edges(e,a,b)
\Edge(c)(d)
\end{tikzpicture}}
&
\scalebox{0.7}
{\begin{tikzpicture}[scale=1,rotate=180]
\renewcommand*{\EdgeLineWidth}{1.7pt}
\GraphInit[vstyle=Simple]
\SetVertexSimple[MinSize=6pt]
\Vertices{circle}{a,b,c,d,e,f}
\Edges(b,c)
\Edges(d,e,f,a)
\end{tikzpicture}}\\
$\overline{2P_1+P_2}$ &
$3P_1+P_2$ &
$2P_1+P_3$ &
$P_2+P_3$ &
$P_2+P_4$
\end{tabular}
\caption{\label{fig:diamond-graphs}The graphs in Theorem~\ref{t-main}.}
\end{center}
\end{figure}

\newpage
\begin{theorem}\label{t-main}
The class of $(H_1,H_2)$-free graphs has bounded clique-width if
\begin{enumerate}[(i)]
\item $H_1= \overline{2P_1+P_2}$ and $H_2=3P_1+P_2$;
\item $H_1=\overline{2P_1+P_2}$ and $H_2=2P_1+P_3$;
\item  $H_1= \overline{2P_1+P_2}$ and $H_2=P_2+P_3$.
\end{enumerate}
The class of $(H_1,H_2)$-free graphs has unbounded clique-width if
\begin{enumerate}[(i)]
\setcounter{enumi}{3}
\item $H_1= \overline{2P_1+P_2}$ and $H_2=P_2+P_4$.
\end{enumerate}
\end{theorem}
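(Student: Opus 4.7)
The plan for (i)--(iii) is to apply the two-step technique from Section~\ref{s-clique}: bound the clique-covering number while reducing to a perfect subclass, then invoke Lemma~\ref{lem:struct2}. The starting observation is that in any diamond-free graph the neighbourhood of every vertex induces a $P_3$-free graph, hence a disjoint union of cliques, and any two maximal cliques share at most one vertex; this rigidity is exactly what lets the second forbidden subgraph force a bounded clique cover.

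For case (i), forbidding $3P_1+P_2$ is equivalent to saying that for every edge $uv$ the set $V(G)\setminus N[\{u,v\}]$ has independence number at most $2$. Combined with the disjoint-cliques structure of each neighbourhood, an inductive peeling argument, in which at each step one identifies either a dominating edge or a high-degree vertex, should cover $V(G)$ by a bounded number of cliques, with the remaining quotient being bipartite so that Lemma~\ref{lem:struct2} applies. For case (ii), forbidding $2P_1+P_3$ forces the vertices outside the closed neighbourhood of any induced $P_3$ to form a clique; after deleting a bounded ``dense'' set, what remains has very restricted path structure and falls into a perfect subclass on which the bounded clique cover is immediate.

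Case (iii) is where I expect the real work. Forbidding $P_2+P_3$ means that for every edge $uv$ the graph $G-N[\{u,v\}]$ is $P_3$-free, hence a disjoint union of cliques. Iterating this across a carefully chosen bounded family of edges, and using diamond-freeness to tightly control how these cliques attach back to the dominating sets, should decompose $G$, up to a bounded exceptional set, into a bounded number of cliques joined by sparse bipartite links, at which point the residual graph is chordal and Lemma~\ref{lem:struct2} finishes the argument.

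For the unboundedness claim (iv), I would start from the standard wall construction of unbounded clique-width and apply unboundedness-preserving graph operations such as local complementation, bipartite complementation, or twin duplication, chosen so as to simultaneously kill every induced diamond, using that walls are triangle-free and modifications only create triangles in isolated gadgets, and every induced $P_2+P_4$, using that the long non-adjacent edge-plus-path needed to realize it is blocked by the grid geometry of the wall. The main obstacle is case (iii): $P_2+P_3$ is loose enough that a one-shot local analysis leaks, and the iterated clique-covering reduction must be carried out in a way that keeps the total clique count bounded by a single constant independent of $|V(G)|$; the combinatorial bookkeeping in (iv), verifying that the chosen wall modification avoids \emph{every} induced $P_2+P_4$, is likely delicate but not conceptually hard.
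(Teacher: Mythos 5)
Your plan has two genuine gaps. First, for (i)--(iii) you repeatedly assert that the reduction will end with $V(G)$ covered by a bounded number of cliques so that Lemma~\ref{lem:struct2} applies, but this cannot work for the classes as a whole: the class in (i) contains all $(K_3,3P_1+P_2)$-free graphs, and the classes in (ii) and (iii) contain, e.g., arbitrarily large disjoint unions of edges or cliques, so the clique covering number is unbounded on each class (as the paper itself notes, for $K_3$-free graphs it grows linearly). The actual argument only applies the bounded-clique-cover step to graphs that contain specific structures or have bounded independence number, and it needs other tools for the remainder: in (i) one first splits via Lemma~\ref{lem:struct1b} into a $(K_3,3P_1+P_2)$-free case (handled by the known result Lemma~\ref{lem:K3-3P1+P2}) and a $10P_1$-free case, where after eliminating $C_5$ and $C_7$ the graph is perfect and Lemma~\ref{l-clique} converts $10P_1$-freeness into a clique cover of size at most~$9$; in (ii) and (iii) the endgame is not a clique cover at all but a reduction (after case analyses on $C_4$, $C_5$, $C_6$, $K_5$ and longer cycles) to $(\overline{2P_1+P_2})$-free \emph{chordal} graphs, which have clique-width at most~$3$ by Lemma~\ref{lem:diamond-chordal}, with bipartite pieces handled by Lemma~\ref{lem:bipartite}. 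Your sketch gives no analogue of the splitting in (i), no mechanism to bound $\alpha$, and misattributes the chordal endgame to Lemma~\ref{lem:struct2}, so the ``bounded clique count independent of $|V(G)|$'' you hope for simply does not exist.

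Second, for (iv) you state the goal (modify walls by unboundedness-preserving operations so that all induced diamonds and all induced $P_2+P_4$'s disappear) without exhibiting a construction, and your guiding intuition is backwards: a wall, or any $k$-subdivided wall, contains many induced copies of $P_2+P_4$, so the ``grid geometry'' does not block them --- one must make the graph locally dense to destroy them. The paper's construction is to subdivide every edge of the wall exactly once (new vertices $B$) and then apply a single bipartite complementation between the two original colour classes $A$ and $C$, so that $A\cup C$ induces a complete bipartite graph; unboundedness follows from Lemma~\ref{l-walls} and Fact~\ref{fact:bip}, and the completeness of $A$ to $C$ together with the degree-$2$, pairwise-twin-free structure of $B$ is exactly what rules out both $\overline{2P_1+P_2}$ and $P_2+P_4$. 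Note also that this graph is far from triangle-free, contrary to your remark that triangles only arise in ``isolated gadgets''; the verification is a short but specific case analysis, and without naming the construction your proposal does not contain it.
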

We prove statements~(i)--(iv) of Theorem~\ref{t-main} in~Sections~\ref{s-1}--\ref{s-4}, respectively.
In Section~\ref{s-4} we also prove that the {\sc Graph Isomorphism} problem is
{\sc Graph Isomorphism}-complete for the class of $(\overline{2P_1+P_2},P_2+P_4)$-free graphs.
This result was one of the remaining open cases in a line of research initiated by Kratsch and Schweitzer~\cite{KS12}, who tried to classify the complexity of the {\sc Graph Isomorphism} problem in graph classes defined by two forbidden induced subgraphs.
The exact number of open cases is not known, but
Schweitzer~\cite{Sc15} very recently proved that this number is finite.

\medskip
\noindent
{\bf Structural consequences.}
Theorem~\ref{t-main} reduces the number of open cases in the classification of the boundedness of the clique-width for $(H_1,H_2)$-free graphs
to~13 open cases, up to an equivalence relation.\footnote{\label{footnote:equivalent}Let $H_1,\ldots, H_4$ be four graphs. Then the classes of
$(H_1,H_2)$-free graphs and $(H_3,H_4)$-free graphs are {\em
equivalent} if the unordered pair $\{H_3,H_4\}$ can be obtained from the unordered
pair $\{H_1,H_2\}$ by some combination of the following two operations: complementing both graphs in the pair; or if one of the graphs in the pair is $K_3$, replacing it with $\overline{P_1+P_3}$ or vice versa.
If two classes are equivalent then one has bounded clique-width if and only if the other one does (see e.g.~\cite{DP15}).}
Note that the graph $H_1$ is the diamond in each of the four results in Theorem~\ref{t-main}.
Out of the 13 remaining cases, there are still three cases in which~$H_1$ is the diamond, namely when
$H_2\in \{P_1+P_2+P_3, P_1+2P_2, P_1+P_5\}$.
However, for each of these graphs~$H_2$, it is not even known whether the clique-width of the corresponding smaller subclasses of $(K_3,H_2)$-free graphs is bounded. Of particular note is the class of
 $(K_3,P_1+2P_2)$-free graphs, which is contained in all of the above open cases and for which the boundedness of clique-width is unknown. Settling this case is a natural next step in completing the classification.
Note that for $K_3$-free graphs the clique covering number is proportional to the size of the graph.
Another natural research direction is to determine whether the clique-width of $(\overline{P_1+P_4},H_2)$-free
graphs is bounded for $H_2 = P_2+\nobreak P_3$ (the clique-width is known to be unbounded for $H_2 \in \{3P_1+P_2,2P_1+P_3\}$).

Dabrowski, Golovach and Paulusma~\cite{DGP14} showed that {\sc Colouring} restricted to $(sP_1+\nobreak P_2,\allowbreak \overline{tP_1+P_2})$-free graphs is polynomial-time solvable for all pairs of integers~$s,t$.
They justified their algorithm by proving that the clique-width of the class of $(sP_1,\overline{tP_1+P_2})$-free graphs is bounded only for small values of~$s$ and~$t$, namely only for $s\leq 2$ or $t\leq 1$ or $s+t\leq 6$.
In the light of these two results it is natural to try to classify the clique-width of the class of $(sP_1+\nobreak P_2,\allowbreak \overline{tP_1+P_2})$-free graphs for all pairs $(s,t)$.
Theorem~\ref{t-main}, combined with the aforementioned classification of the clique-width of $(sP_1,\overline{tP_1+P_2})$-free graphs and the fact that any class of $(H_1,H_2)$-free graphs has bounded clique-width if and only if the class of $(\overline{H_1},\overline{H_2})$-free graphs has bounded clique-width,
immediately enables us to do this.

\begin{corollary}\label{c-extension}
The class of $(sP_1+P_2,\overline{tP_1+P_2})$-free graphs has bounded clique-width if and only if $s\leq 1$ or $t\leq 1$ or $s+t\leq 5$.
\end{corollary}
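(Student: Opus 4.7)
The plan is to split the biconditional into a sufficiency and a necessity direction, drawing only on Theorem~\ref{t-main}(i), the classification of $(sP_1,\overline{tP_1+P_2})$-free graphs in~\cite{DGP14} (bounded clique-width iff $s\leq 2$ or $t\leq 1$ or $s+t\leq 6$), the complementation equivalence of the classes of $(H_1,H_2)$-free and $(\overline{H_1},\overline{H_2})$-free graphs, and a pair of easy induced-subgraph containments.

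For the sufficiency direction I would dispatch three cases. If $s\leq 1$, use that $P_1+P_2=\overline{P_3}$, so $(P_1+P_2)$-free graphs are complements of $P_3$-free graphs, i.e.\ complete multipartite graphs; these are cographs and have clique-width at most~$2$, independently of the second forbidden subgraph. If $t\leq 1$, the class is either edgeless (when $t=0$) or $(sP_1+P_2,P_3)$-free (when $t=1$), and in the latter case every $P_3$-free graph is a disjoint union of cliques, hence a cograph. For $s,t\geq 2$ with $s+t\leq 5$ only the pairs $(2,2),(2,3),(3,2)$ remain: the pair $(3,2)$ is exactly Theorem~\ref{t-main}(i); the pair $(2,3)$ reduces to the same class by complementing both forbidden subgraphs; and for $(2,2)$ I would use that $2P_1+P_2$ is an induced subgraph of $3P_1+P_2$, so every $(2P_1+P_2,\overline{2P_1+P_2})$-free graph is a fortiori $(3P_1+P_2,\overline{2P_1+P_2})$-free, bringing us back to Theorem~\ref{t-main}(i).

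For the necessity direction, fix any $(s,t)$ with $s,t\geq 2$ and $s+t\geq 6$. The key observation is that $(s+1)P_1$ is an induced subgraph of $sP_1+P_2$: take the $s$ isolated vertices together with either endpoint of the $P_2$. Hence every $\bigl((s+1)P_1\bigr)$-free graph is already $(sP_1+P_2)$-free, so
\[
\bigl((s+1)P_1,\;\overline{tP_1+P_2}\bigr)\text{-free}\;\subseteq\;\bigl(sP_1+P_2,\;\overline{tP_1+P_2}\bigr)\text{-free}.
\]
With $s':=s+1$, the classification of~\cite{DGP14} says the left-hand class has unbounded clique-width exactly when $s'\geq 3$, $t\geq 2$ and $s'+t\geq 7$, i.e.\ when $s\geq 2$, $t\geq 2$ and $s+t\geq 6$, and this unboundedness transfers to the superclass on the right. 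The only step that is not pure bookkeeping is spotting the containment $(s+1)P_1\subseteq sP_1+P_2$; this is exactly what lowers the DGP14 threshold $s'+t\geq 7$ to the tight $s+t\geq 6$ required by the corollary, while everything else is a routine match against Theorem~\ref{t-main}(i) and one use of complementation.
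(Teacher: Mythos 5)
Your proof is correct and follows essentially the same route the paper intends: boundedness via Theorem~\ref{t-main}(i), complementation and the trivial cases $s\leq 1$ or $t\leq 1$, and unboundedness by transferring the $((s+1)P_1,\overline{tP_1+P_2})$-free classification of~\cite{DGP14} through the containment $(s+1)P_1\ssi sP_1+P_2$. One harmless slip: for $t=0$ the forbidden graph is $\overline{P_2}=2P_1$, so these graphs are complete rather than edgeless, but their clique-width is still at most~$2$, so the conclusion is unaffected.
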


\noindent
{\bf Algorithmic consequences.}
Our research was (partially) motivated by a study into the computational complexity of the {\sc Colouring} problem for $(H_1,H_2)$-free
graphs. As mentioned, {\sc Colouring} is polynomial-time solvable on any graph class of bounded clique-width.
Of the three classes for which we prove boundedness of clique-width in this paper, only the case of $(\overline{2P_1+P_2},3P_1+P_2)$-free (and equivalently\textsuperscript{\ref{footnote:equivalent}} $(2P_1+P_2,\overline{3P_1+P_2})$-free) graphs was previously known to be polynomial-time solvable~\cite{DGP14}.
Hence, Theorem~\ref{t-main} gives us four new pairs $(H_1,H_2)$ with the property that
{\sc Colouring} is polynomial-time solvable when restricted to $(H_1,H_2)$-free graphs,
namely if
\begin{itemize}
\item [$\bullet$] $H_1=2P_1+P_2$ and $H_2\in \{\overline{2P_1+P_3},\overline{P_2+P_3}\}$;
\item [$\bullet$] $H_1=\overline{2P_1+P_2}$ and $H_2\in \{2P_1+P_3,P_2+P_3\}$.
\end{itemize}
There are still 15 potential classes of $(H_1,H_2)$-free graphs left for which both the complexity of {\sc Colouring} and the boundedness
of their clique-width is
unknown~\cite{DP15}.

\section{Preliminaries}\label{s-preliminaries}

Below we define some graph terminology used throughout our paper.
For any undefined terminology we refer to Diestel~\cite{Di12}.
Let $G$ be a graph.
For $u\in V(G)$, the set $N(u)=\{v\in V(G)\; |\; uv\in E(G)\}$ is the {\em  neighbourhood} of $u$ in $G$.
The {\em degree} of a vertex in $G$ is the size of its neighbourhood.
The {\em maximum degree} of $G$ is the maximum vertex degree.
For a subset $S\subseteq V(G)$, we let $G[S]$ denote the {\it induced} subgraph of $G$, which has vertex set~$S$ and edge set $\{uv\; |\; u,v\in S, uv\in E(G)\}$.
If $S=\{s_1,\ldots,s_r\}$ then, to simplify notation, we may also write $G[s_1,\ldots,s_r]$ instead of $G[\{s_1,\ldots,s_r\}]$.
Let~$H$ be another graph. We write $H\ssi G$ to indicate that $H$ is an induced subgraph of~$G$.
Let $X\subseteq V(G)$. We write $G\setminus X$ for the graph obtained from~$G$ after removing~$X$.
A set $M\subseteq E(G)$ is a {\it matching} if no two edges in~$M$ share an end-vertex.
We say that two disjoint sets $S\subseteq V(G)$ and $T\subseteq V(G)$  are {\it complete} to each other if every vertex of~$S$ is adjacent to every vertex of~$T$.
If no vertex of $S$ is joined to a vertex of $T$ by an edge, then~$S$ and $T$ are {\it anti-complete} to each other. Similarly, we say that
a vertex $u$ and a set $S$ not containing $u$ may be complete or anti-complete to each other.
Let  $\{H_1,\ldots,H_p\}$ be a set of graphs. Recall that
$G$ is {\it $(H_1,\ldots,H_p)$-free} if $G$ has no induced subgraph isomorphic to a graph in $\{H_1,\ldots,H_p\}$; if $p=1$, we may write $H_1$-free instead of $(H_1)$-free.

The {\em clique-width} of a graph $G$, denoted by $\cw(G)$, is the minimum
number of labels needed to
construct $G$ by
using the following four operations:
\begin{enumerate}[(i)]
\item creating a new graph consisting of a single vertex $v$ with label $i$;
\item taking the disjoint union of two labelled graphs $G_1$ and $G_2$;
\item joining each vertex with label $i$ to each vertex with label $j$ $(i\neq j)$;
\item renaming label $i$ to $j$.
\end{enumerate}
An algebraic term that represents such a construction of $G$ and uses at most~$k$ labels is said to be a {\em $k$-expression} of $G$ (i.e. the clique-width of $G$ is the minimum~$k$ for which $G$ has a $k$-expression).
A class of graphs ${\cal G}$ has \emph{bounded} clique-width if
there is a constant $c$ such that the clique-width of every graph in ${\cal G}$ is at most~$c$; otherwise the clique-width of~${\cal G}$ is
{\em unbounded}.

Let $G$ be a graph.
We say that $G$ is  \emph{bipartite} if its vertex set can be partitioned into two (possibly empty) independent sets $B$ and $W$. We say that $(B,W)$ is a {\em bipartition} of $G$.

Let $G$ be a graph. We define the following two operations.
For an induced subgraph $G'\ssi\nobreak G$,  the {\em subgraph complementation} operation (acting on $G$ with respect to $G'$) replaces every edge present in $G'$
by a non-edge, and vice versa.
Similarly, for two disjoint vertex subsets~$X$ and~$Y$ in~$G$, the {\em bipartite complementation} operation with respect to $X$ and~$Y$ acts on~$G$ by replacing
every edge with one end-vertex in $X$ and the other one in $Y$ by a non-edge and vice versa.

We now state some useful facts for dealing with clique-width. We will use these facts throughout the paper.
Let $k\geq 0$ be a constant and let $\gamma$ be some graph operation.
We say that a graph class ${\cal G'}$ is {\it $(k,\gamma)$-obtained} from a graph class ${\cal G}$
if the following two conditions hold:
\begin{enumerate}[(i)]
\item every graph in ${\cal G'}$ is obtained from a graph in ${\cal G}$ by performing $\gamma$ at most $k$ times, and
\item for every $G\in {\cal G}$ there exists at least one graph
in ${\cal G'}$ obtained from $G$ by performing $\gamma$ at most $k$ times.
\end{enumerate}
We say that $\gamma$ {\em preserves} boundedness of clique-width if
for any finite constant~$k$ and any graph class ${\cal G}$, any graph class
${\cal G}'$ that is $(k,\gamma)$-obtained from ${\cal G}$
has bounded clique-width if and only if  ${\cal G}$  has bounded clique-width.

\begin{enumerate}[{Fact} 1.]
\item \label{fact:del-vert} Vertex deletion preserves boundedness of clique-width~\cite{LR04}.\\[-1em]

\item \label{fact:comp} Subgraph complementation preserves boundedness of clique-width~\cite{KLM09}.\\[-1em]

\item \label{fact:bip} Bipartite complementation preserves boundedness of clique-width~\cite{KLM09}.\\[-1em]

\item \label{fact:deg1} If ${\cal G}$ is a class of graphs and ${\cal G}'$ is the class of graphs obtained from graphs in ${\cal G}$ by recursively deleting all vertices of degree~1, then~${\cal G}$ has bounded clique-width if and only if~${\cal G}'$ has bounded clique-width \cite{BL02,LR04}.
\end{enumerate}

\medskip
\noindent
The following lemmas
are well-known and straightforward to check.

\begin{lemma}\label{lem:tree}
The clique-width of a forest is at most~$3$.
\end{lemma}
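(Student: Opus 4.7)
The plan is to reduce to a single rooted tree and proceed by induction on the number of vertices. A forest $F$ is a disjoint union of trees, and since the disjoint-union operation in the definition of clique-width uses no additional labels, it suffices to construct each tree via a $3$-expression and combine them. In particular, I would prove the stronger statement that each tree admits a $3$-expression in which the root receives a prescribed label and every non-root vertex receives a second prescribed label; the components of $F$ can then be produced one by one, disjoint-unioned together, and the two colour classes merged via a relabel operation between steps, without ever exceeding three labels.

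For a tree $T$ rooted at $r$, the inductive claim is that $T$ has a $3$-expression over the label set $\{1,2,3\}$ under which $r$ ends up labelled $1$ and every other vertex ends up labelled $2$. The base case is a single vertex created with label $1$. For the inductive step, let $r$ have children $c_1,\ldots,c_k$ rooting subtrees $T_1,\ldots,T_k$. Starting from $r$ alone with label $1$, I would attach each $T_i$ in turn via a four-step routine: (a) build $T_i$ inductively so that $c_i$ carries label $1$ and the remaining vertices of $T_i$ carry label $2$; (b) inside this subexpression rename $1 \to 3$, so that $c_i$ becomes the unique vertex carrying label $3$; (c) take the disjoint union with the partial construction; (d) apply the join operation between labels $1$ and $3$, inserting exactly the edge $rc_i$ because $r$ is the unique label-$1$ vertex present, and then rename $3 \to 2$. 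After the loop terminates the invariant is restored, and $T$ has been constructed using only three labels.

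The main point requiring care is the label bookkeeping. Label $1$ must be reserved exclusively for the current root $r$ throughout a subcall; the relabel $1 \to 3$ inside the child's subexpression must be performed \emph{before} that subexpression is combined with the parent construction, otherwise both $r$ and $c_i$ would share label $1$ and the subsequent join step would introduce spurious edges. Similarly, label $3$ must be free of any other vertex at the moment of the disjoint union, so that the join operation produces exactly one new edge. Once these invariants are respected, the construction uses only the four permitted operations and three labels, yielding $\cw(F)\leq 3$.
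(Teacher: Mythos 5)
Your proof is correct: the rooted-tree induction with the invariant ``root labelled $1$, all other vertices labelled $2$'' and the rename--union--join--rename routine is exactly the standard folklore argument, and the label bookkeeping you flag (renaming $1\to 3$ in the child's subexpression before the union, so the join between labels $1$ and $3$ creates only the edge $rc_i$) is handled properly. The paper itself gives no proof, merely asserting the lemma as well-known and straightforward to check, and your construction is the intended one.
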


\begin{lemma}\label{lem:atmost2}
The clique-width of a graph of maximum degree at most~$2$ is at~most~$4$.
\end{lemma}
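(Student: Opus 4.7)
My plan is to decompose $G$ into its connected components and handle each separately, then recombine them via disjoint union. Since every connected graph of maximum degree at most~$2$ is either a path or a cycle, the components of $G$ are paths and cycles, and the problem reduces to bounding the clique-width of each of these.

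For a path (which is a forest), Lemma~\ref{lem:tree} immediately gives clique-width at most~$3$. For a cycle $C_n$ with $n\geq 3$ I would exhibit an explicit $4$-expression built along the natural ordering $v_1,v_2,\ldots,v_n$: reserve label~$4$ for the fixed first vertex $v_1$, then extend one vertex at a time, giving the current endpoint a label in $\{1,2\}$ distinct from the previous endpoint's label, joining labels $1$ and $2$ to create the new edge, and renaming the former endpoint to~$3$ so it becomes ``inert''. After processing $v_n$ one final join between label~$4$ and the current endpoint's label closes the cycle. At every intermediate stage only the four labels $\{1,2,3,4\}$ are in use.

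Finally, to put components together I would invoke the fact that taking the disjoint union of two labelled graphs is one of the four basic operations in the definition of clique-width and introduces no new labels. Therefore gluing together the (at most) $4$-expressions for each path and cycle component of $G$ yields a $4$-expression for $G$ itself, which gives $\cw(G)\le 4$.

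The only step requiring any care is verifying that the cycle construction really does stay within four labels throughout and produces exactly the edge set of $C_n$; this is purely bookkeeping, since label~$4$ holds only $v_1$, label~$3$ holds all completed interior vertices (which receive no further joins), and labels $1$ and $2$ alternate on the single ``active'' endpoint. I do not anticipate any conceptual obstacle.
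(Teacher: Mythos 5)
Your proof is correct: the decomposition into path and cycle components, the use of Lemma~\ref{lem:tree} for paths, the explicit $4$-label expression for cycles (active endpoint alternating between labels $1$ and $2$, inert interior vertices relabelled $3$, label $4$ reserved for the first vertex to close the cycle), and the disjoint-union step are all sound. The paper gives no proof, stating the lemma as well known and straightforward to check, and your argument is precisely the standard one being alluded to.
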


\noindent
Let $G$ be a graph. The size of a largest independent set and a largest clique in~$G$ are denoted by $\alpha(G)$ and $\omega(G)$, respectively.
The chromatic number of $G$ is denoted by $\chi(G)$.
We say that~$G$ is {\em perfect} if $\chi(H)=\omega(H)$ for every induced subgraph $H$ of $G$.

We need the following well-known result, due to Chudnovsky, Robertson, Seymour and Thomas.

\begin{theorem}[The Strong Perfect Graph Theorem~\cite{CRST06}]\label{thm:spgt}
A graph is perfect if and only if it is $C_r$-free and $\overline{C_r}$-free
for every odd $r \geq 5$.
\end{theorem}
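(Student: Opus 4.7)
The plan is to handle the two directions separately. The easy direction—that every perfect graph is Berge, i.e.\ has no induced $C_r$ or $\overline{C_r}$ for odd $r \geq 5$—is immediate from the definition: for odd $r\geq 5$ we have $\chi(C_r)=3$ but $\omega(C_r)=2$, and by Lovász's (weak) Perfect Graph Theorem the class of perfect graphs is closed under complementation, so $\overline{C_r}$ is also imperfect. Since an induced subgraph of a perfect graph is perfect, the presence of any such induced subgraph would contradict perfection. So all the real work lies in the converse: every Berge graph is perfect.

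For the hard direction I would argue by minimal counterexample, aiming to prove and then apply a structural decomposition theorem for Berge graphs. The goal of the structure theorem is to show that every Berge graph $G$ either (a) belongs to one of a short list of \emph{basic} classes—bipartite graphs, line graphs of bipartite graphs, the so-called double split graphs, or the complements of either of the first two—or (b) admits a non-trivial decomposition of one of a short list of types, namely a $2$-join, a $2$-join in the complement, or a balanced skew partition. Once such a theorem is in hand, it suffices to verify (i) that every basic class is perfect, and (ii) that no minimum imperfect Berge graph can admit any of the listed decompositions; these two facts together force the minimum counterexample not to exist.

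Step (i) is essentially book-keeping: bipartite graphs are perfect by König's theorem, line graphs of bipartite graphs by König's edge-colouring theorem, double split graphs by direct verification, and the complementary classes by the weak perfect graph theorem. Step (ii) would be handled by gluing arguments: for the two $2$-join cases, one combines optimal colourings or clique-covers on the two sides along the join, checking that the parity constraints forced by the Berge hypothesis allow the pieces to be matched up; for the skew partition case, one traces parities of induced paths and antipaths across the cut and derives either an odd hole or an odd antihole.

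The main obstacle, by a wide margin, will be \emph{proving} the structure theorem itself, and within that the sub-case of ruling out (in a minimum counterexample) a balanced skew partition: this requires a long and delicate analysis of how odd induced paths and antipaths behave across the cutset, and is the heart of the original Chudnovsky--Robertson--Seymour--Thomas proof. In a first pass I would therefore treat the structure theorem as a black box, write up the clean parts (the basic classes and the $2$-join gluing), and isolate the skew-partition argument as the one step that genuinely needs heavy structural machinery.
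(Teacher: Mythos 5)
This statement is not proved in the paper at all: it is the Strong Perfect Graph Theorem, imported verbatim from Chudnovsky, Robertson, Seymour and Thomas~\cite{CRST06} and used as a black box (e.g.\ in Lemma~\ref{l-clique} and in the proofs of Theorem~\ref{t-main}~(i) and~(ii)). So the only fair comparison is between your proposal and the literature proof you are paraphrasing, and there the proposal has a genuine gap. Your easy direction is fine (and could even avoid Lov\'asz's theorem: for odd $r=2k+1$ one has $\omega(\overline{C_r})=k$ but $\alpha(\overline{C_r})=2$, so $\chi(\overline{C_r})\geq\lceil r/2\rceil=k+1>\omega(\overline{C_r})$ directly). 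But for the hard direction you do not give a proof: you state the decomposition theorem for Berge graphs as a goal and then explicitly defer both its proof and the ``no balanced skew partition in a minimum counterexample'' step. Those two items \emph{are} the theorem --- the CRST argument runs to well over a hundred pages, and the skew-partition analysis plus the construction of the decomposition (which in the original version also involves homogeneous pairs, only later shown removable by Chudnovsky) is precisely where all the difficulty lives. Declaring them a black box means the proposal is a survey of the known proof strategy, not a proof.

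Two smaller inaccuracies in the outline are worth flagging: the claim that a minimum imperfect Berge graph admits no balanced skew partition is itself a substantial theorem (the unbalanced/star-cutset case is Chv\'atal, but the balanced case is not a routine ``trace parities across the cut'' argument), and the $2$-join steps are usually handled by invoking results in the style of Cornu\'ejols--Cunningham rather than by naive colouring-gluing, since one must control the parity structure of paths through the join. If your intention is merely to justify the statement as used in this paper, the correct move is the paper's own: cite~\cite{CRST06} and do not attempt a proof.
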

The {\it clique covering number} $\overline{\chi}(G)$ of a graph $G$ is the smallest number of (mutually vertex-disjoint) cliques such that every vertex of $G$ belongs to exactly one clique.
 If $G$ is perfect, then $\overline{G}$ is also perfect (by Theorem~\ref{thm:spgt}).
By definition,~$\overline{G}$ can be partitioned into $\omega(\overline{G})=\alpha(G)$ independent sets. This leads to the following well-known lemma.

 \begin{lemma}\label{l-clique}
Let $G$ be any perfect graph. Then $\overline{\chi}(G)= \alpha(G)$.
 \end{lemma}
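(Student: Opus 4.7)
The plan is to reduce the identity $\overline{\chi}(G)=\alpha(G)$ to the defining equality $\chi=\omega$ applied to the complement. First I would invoke the Strong Perfect Graph Theorem (Theorem~\ref{thm:spgt}) to conclude that $\overline{G}$ is perfect whenever $G$ is. This follows because the forbidden family appearing in that characterization, namely all odd holes together with all odd anti-holes of length at least~$5$, is itself closed under complementation; hence $G$ is $C_r$- and $\overline{C_r}$-free for all odd $r\geq 5$ if and only if $\overline{G}$ is. Then, since perfection means $\chi(H)=\omega(H)$ for every induced subgraph, I can apply this property to $\overline{G}$ itself (which is an induced subgraph of itself) to obtain $\chi(\overline{G})=\omega(\overline{G})$.

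The remaining step is purely definitional. A proper vertex colouring of $\overline{G}$ with $k$ colours is the same thing as a partition of $V(G)$ into $k$ sets, each of which is independent in $\overline{G}$, i.e.\ a clique in $G$; so $\chi(\overline{G})=\overline{\chi}(G)$. Similarly, a clique of $\overline{G}$ is exactly an independent set of $G$, giving $\omega(\overline{G})=\alpha(G)$. Chaining these yields
\[
\overline{\chi}(G)\;=\;\chi(\overline{G})\;=\;\omega(\overline{G})\;=\;\alpha(G),
\]
as required. There is essentially no obstacle here: the only non-trivial ingredient is the self-complementary nature of the forbidden list in Theorem~\ref{thm:spgt}, and everything else is an unwinding of definitions.
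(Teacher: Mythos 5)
Your proof is correct and follows essentially the same route as the paper: the paper likewise deduces perfection of $\overline{G}$ from Theorem~\ref{thm:spgt} and then reads off $\overline{\chi}(G)=\chi(\overline{G})=\omega(\overline{G})=\alpha(G)$ by unwinding the definitions. No issues.
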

We say that a graph $G$ is \emph{chordal} if it contains no induced cycle on four or more vertices. Bipartite graphs and chordal graphs are perfect (by Theorem~\ref{thm:spgt}).

The following three lemmas give us a number of subclasses of perfect graphs with bounded clique-width. We will make use of these lemmas later on in the proofs as part of our technique.

\begin{lemma}[\cite{DP14}]\label{lem:bipartite}
Let $H$ be a graph. The class of $H$-free bipartite graphs has bounded
clique-width if and only if one of the following cases holds:
\begin{itemize}
\item [$\bullet$] $H=sP_1$ for some $s\geq 1$
\item [$\bullet$] $H\ssi K_{1,3}+3P_1$
\item [$\bullet$] $H\ssi K_{1,3}+P_2$
\item [$\bullet$] $H\ssi P_1+S_{1,1,3}$
\item [$\bullet$] $H\ssi S_{1,2,3}$.
\end{itemize}
\end{lemma}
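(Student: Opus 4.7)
\medskip\noindent
The plan is to prove the characterization in two directions: \emph{sufficiency}, that each $H$ in the list gives bounded clique-width for $H$-free bipartite graphs, and \emph{necessity}, that each $H$ not in the list admits a family of $H$-free bipartite graphs of unbounded clique-width.

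For sufficiency I would handle the five cases separately. The case $H=sP_1$ is immediate: an $sP_1$-free bipartite graph $G$ has $\alpha(G)<s$, but bipartiteness forces $\alpha(G)\geq |V(G)|/2$, so $|V(G)|<2s$ and the clique-width is trivially bounded. For $H\ssi K_{1,3}+P_2$ or $H\ssi K_{1,3}+3P_1$ I would use the fact that a $K_{1,3}$-free bipartite graph has maximum degree at most~$2$, because the neighbourhood of any vertex lies entirely in the other part and is therefore independent; Lemma~\ref{lem:atmost2} then gives clique-width at most~$4$. When $G$ does contain an induced $K_{1,3}$, the $H$-freeness forces the vertices outside it (or a suitable subset of them, after applying Facts~\ref{fact:del-vert} and~\ref{fact:bip}) to avoid an induced $P_2$ or $3P_1$ anti-complete to the $K_{1,3}$, and a short Ramsey-type argument together with bipartiteness reduces $G$ to a bounded-size graph. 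The remaining two cases, $H\ssi P_1+S_{1,1,3}$ and $H\ssi S_{1,2,3}$, are the technical heart: here I would fix a longest induced path in $G$ and show that the forbidden subgraph condition forces the other vertices to split into a bounded number of neighbourhood types along this path. After simplifying dense parts by bipartite complementation (Fact~\ref{fact:bip}), deleting pendant vertices (Fact~\ref{fact:deg1}), and removing a bounded number of exceptional vertices (Fact~\ref{fact:del-vert}), the graph should fall into a well-understood subclass with bounded clique-width, such as bipartite graphs of bounded independence number or of bounded maximum degree.

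For necessity I would enumerate the minimal graphs $H$ outside the list and, for each, exhibit an explicit family of bipartite graphs of unbounded clique-width that avoids $H$. Any such minimal $H$ has an edge and must either contain a cycle (e.g.\ $C_4$), a vertex of degree at least~$4$ (e.g.\ $K_{1,4}$), an induced $P_7$, or a more branched subdivided claw such as $S_{2,2,2}$. Hexagonal walls---which are bipartite, have girth~$6$, maximum degree~$3$, and unbounded clique-width---take care of the cases in which $H$ contains a short cycle or a high-degree vertex; suitably subdivided or truncated walls, together with bipartite incidence graphs drawn from combinatorial designs known in the literature to have unbounded clique-width, handle the remaining minimal forbidden graphs (in particular those containing long induced paths or bushy subdivided claws).

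The main obstacle is the sufficiency argument for $S_{1,2,3}$-free bipartite graphs. Since $S_{1,2,3}$ has seven vertices and contains $P_6$, neither a size bound nor a degree bound is available, so one has to carry out a delicate structural induction while ensuring that the clique-width-preserving operations of Facts~\ref{fact:del-vert}--\ref{fact:deg1} neither introduce new forbidden subgraphs nor blow up the label count. Identifying the correct inductive invariant---roughly, that $G$ contains a bounded-length induced path whose removal leaves a strictly simpler bipartite structure that is still $S_{1,2,3}$-free---is the crux of the argument.
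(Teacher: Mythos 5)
The paper does not prove Lemma~\ref{lem:bipartite} at all: it is imported as a black box from~\cite{DP14}, so there is no in-paper argument to compare yours with, and what you propose amounts to reproving that whole classification from scratch. Judged on its own terms, your text is a programme rather than a proof, and some of its concrete steps fail. First, for $H\ssi K_{1,3}+P_2$ and $H\ssi K_{1,3}+3P_1$ you claim that once $G$ contains an induced claw, a Ramsey-type argument ``reduces $G$ to a bounded-size graph''. This cannot work: for $n\geq 3$ the graph $K_{n,n}$ contains induced claws, is both $(K_{1,3}+P_2)$-free and $(K_{1,3}+3P_1)$-free (any edge or independent triple outside a claw meets the claw's neighbourhood, since the graph is complete bipartite), and has $2n$ vertices; Fact~\ref{fact:del-vert} deletes only a constant number of vertices and Fact~\ref{fact:bip} does not change the vertex count, so no size bound follows and the argument must be structural rather than size-based. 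Second, the two cases you yourself call the heart of the matter, $H\ssi P_1+S_{1,1,3}$ and $H\ssi S_{1,2,3}$, are left entirely open (``identifying the correct inductive invariant \ldots is the crux''); these are precisely the substantial theorems behind the cited lemma (the $S_{1,2,3}$ case is Lozin's theorem on bipartite graphs without a skew star, and the remaining cases take lengthy structural analyses in~\cite{DP14}), so a proposal that stops there has not established the lemma.

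The necessity direction also has a genuine hole. Reducing to minimal graphs $H$ outside the list is fine, but your taxonomy of those minimal graphs is incorrect: $3P_2$ and $2P_1+2P_2$ lie outside the list (every listed graph contains at most two pairwise anti-complete edges, and one checks that neither $P_1+S_{1,1,3}$ nor $S_{1,2,3}$ contains an induced $2P_1+2P_2$), yet they contain no cycle, no vertex of degree at least~$4$, no induced $P_7$ and no $S_{2,2,2}$, so your enumeration misses them. Worse, walls and their subdivisions contain induced copies of these linear forests (and of $P_7$), so the families you name cannot witness unboundedness for them, and ``bipartite incidence graphs drawn from combinatorial designs'' is not an explicit construction and comes with no verification of $H$-freeness or of unbounded clique-width. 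In short, the decisive steps of both directions are either missing or, where made concrete, do not hold as stated.
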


\begin{lemma}[\cite{GR99b}]\label{lem:diamond-chordal}
The class of chordal $(\overline{2P_1+P_2})$-free graphs has clique-width at most~$3$.
\end{lemma}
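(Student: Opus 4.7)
The plan is to first characterize the class structurally as block graphs, and then construct a 3-expression recursively along the block tree.

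First I would show that a chordal $(\overline{2P_1+P_2})$-free graph is precisely a \emph{block graph}, i.e.\ a graph in which every block is a clique. One direction is immediate, since block graphs are chordal and diamond-free. For the other direction, a short argument (via a simplicial vertex, or by taking two non-adjacent vertices joined by two internally-disjoint paths) shows that any chordal $2$-connected non-clique graph already contains an induced diamond, so if some block of a chordal graph is not a clique then $G$ contains an induced $\overline{2P_1+P_2}$. These graphs therefore admit a tree-like description: they are obtained by gluing cliques along cut vertices, as recorded by their block tree.

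Next I would prove by induction on $|V(G)|$ the following strengthened statement: for every block graph $G$ and every vertex $r \in V(G)$ there is a $3$-expression of $G$ whose final term places $r$ at label $1$ and every other vertex at label $2$. In the inductive step I would list the blocks $B_1, \dots, B_m$ containing $r$, and let $G_j$ consist of $B_j$ together with every subtree of the block tree hanging off $B_j$ at a vertex other than $r$. Since $G_j \cap G_{j'} = \{r\}$ for $j \neq j'$ and $G = \bigcup_j G_j$, starting from the single-vertex expression with $r$ at label $1$, I would absorb the pieces $G_j \setminus \{r\}$ one after another.

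To build one such $G_j \setminus \{r\}$ as a subexpression, write $B_j \setminus \{r\} = \{u_{j,1}, \dots, u_{j,s_j}\}$ with attached subtree $T_{j,i}$ rooted at $u_{j,i}$. Applying the induction hypothesis to each $T_{j,i}$ produces an expression placing $u_{j,i}$ at label $1$ and the rest of $T_{j,i}$ at label $2$. I would process these subtrees in sequence: relabel $1 \mapsto 3$ inside the fresh $T_{j,i}$-expression, take a disjoint union with the running subexpression (in which the previously processed $u_{j,k}$ sit at label $1$ and everything else at label $2$), apply the join $1 \leftrightarrow 3$ to add the missing clique edges from $u_{j,i}$ to the earlier $u_{j,k}$, and relabel $3 \mapsto 1$. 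The completed piece has $B_j \setminus \{r\}$ at label $1$ and all remaining vertices at label $2$; relabelling $1 \mapsto 3$, taking a disjoint union with the main expression, issuing the join $1 \leftrightarrow 3$ to attach $r$ to $B_j \setminus \{r\}$, and relabelling $3 \mapsto 2$ restores the outer invariant.

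The main obstacle will be the label bookkeeping: throughout the nested recursion (outer over the blocks at $r$, inner over the pendant subtrees along a single block) one must verify that only three labels are ever live simultaneously, with one always playing the role of the ``active'' gluing vertex, one the role of the partially assembled clique of cut vertices, and one the role of everything already finished. Once this is checked, specialising the invariant to any vertex $r$ of the given graph immediately yields the lemma.
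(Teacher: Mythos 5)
Your proposal is correct, but it is worth noting that the paper does not prove Lemma~\ref{lem:diamond-chordal} at all: it is imported verbatim from~\cite{GR99b}, where the bound comes out of Golumbic and Rotics' study of perfect graph classes (chordal $(\overline{2P_1+P_2})$-free graphs are exactly the block graphs, which sit inside the distance-hereditary graphs shown there to have clique-width at most~3). Your argument is a self-contained, elementary alternative. The structural step is sound: block graphs are clearly chordal and diamond-free, and conversely a 2-connected, non-complete chordal block contains two non-adjacent vertices whose minimal separator is a clique of size at least two; two adjacent separator vertices together with one shortest connecting path on each side of the separator force (by chordality) a common neighbour on each side, i.e.\ an induced $\overline{2P_1+P_2}$, so your ``two internally-disjoint paths'' sketch completes. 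The recursive 3-expression also checks out: with the invariant that the current root sits at label~1 and all finished vertices at label~2, every join either links the single ``active'' vertex at label~3 to the partially assembled clique $\{u_{j,1},\dots,u_{j,i-1}\}$ at label~1, or links $r$ at label~1 to $B_j\setminus\{r\}$ at label~3, and in both situations every other already-created vertex is at label~2, so no spurious edges arise and only three labels are ever live. Two routine points you should state explicitly: the decomposition $G=\bigcup_j G_j$ assumes $G$ connected (disconnected block graphs are handled componentwise, relabelling $1\mapsto 2$ in all but one component before taking disjoint unions), and the base case/first subtree of the inner loop needs no join. What the citation route buys is generality (the bound for all distance-hereditary graphs); what your route buys is a short, direct proof tailored to exactly the class the paper needs, with the same explicit bound of~3.
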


\begin{lemma}[\cite{DLRR12}]\label{lem:K3-3P1+P2}
The class of $(K_3,K_{1,3}+P_2)$-free graphs has bounded clique-width.
\end{lemma}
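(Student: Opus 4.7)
The plan is to argue via a dichotomy on the maximum degree of $G$. If $\Delta(G)\le 2$, then Lemma~\ref{lem:atmost2} directly gives $\cw(G)\le 4$. So assume that some vertex $v$ has degree $d\ge 3$, set $N=N(v)$, and note that $N$ is an independent set by $K_3$-freeness. For any three vertices $a,b,c\in N$, the set $\{v,a,b,c\}$ induces a $K_{1,3}$, so $(K_{1,3}+P_2)$-freeness prevents any edge in $R:=V(G)\setminus(\{v\}\cup N)$ from being anti-complete to $\{a,b,c\}$. Quantifying over all such triples, every edge $xy$ of $G[R]$ satisfies $|N\setminus(N(x)\cup N(y))|\le 2$; triangle-freeness further forces $N(x)\cap N(y)\cap N=\emptyset$, hence $|N(x)\cap N|+|N(y)\cap N|\ge d-2$.

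I would then partition $R=R_0\cup R_1$, where $R_0$ consists of vertices with no neighbour in $N$ and $R_1$ of those with at least one such neighbour. For $d\ge 3$, the inequality above forces $R_0$ to be an independent set, every $R_0$--$R_1$ edge to have its $R_1$-endpoint adjacent to at least $d-2$ vertices of $N$, and every edge inside $R_1$ to come from a pair with disjoint, near-covering neighbourhoods in $N$. The goal is to exploit this rigidity to reduce $G$, using a bounded number of clique-width-preserving operations, to a bipartite graph in one of the bounded-clique-width subclasses of Lemma~\ref{lem:bipartite}: after deleting a constant number of anomalous vertices (those in $R_1$ with small neighbourhood in $N$, plus a few exceptional attachments from $R_0$) via Fact~\ref{fact:del-vert}, applying a constant number of bipartite complementations between near-uniform subsets of $R_1$ and $N$ via Fact~\ref{fact:bip}, and possibly a subgraph complementation via Fact~\ref{fact:comp}, the resulting graph should be bipartite with parts of the form $\{v\}\cup R_1'$ and $N\cup R_0'$, still $(K_{1,3}+P_2)$-free, and hence of bounded clique-width by Lemma~\ref{lem:bipartite}. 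Reversing the operations then gives bounded clique-width for $G$.

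The main obstacle is precisely this reduction: showing that the irregular adjacencies between $R_1$ and $N$ can be made uniform by a constant number of bipartite complementations, independent of $d$ and $|V(G)|$, and that the exceptional vertex set is genuinely bounded rather than growing with $d$. This will require a careful case analysis of the possible neighbourhood ``types'' $N(u)\cap N$ for $u\in R_1$, together with an argument that, once $d$ is larger than a small absolute constant, triangle-freeness together with the covering inequality restricts these types to only boundedly many equivalence classes, each of which can be flattened by a single complementation step.
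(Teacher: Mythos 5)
First, a point of comparison: the paper does not prove this lemma at all -- it is imported from~\cite{DLRR12}, where it is established by a structural reduction of $(K_3,K_{1,3}+P_2)$-free graphs to a subclass of bipartite graphs. So your attempt has to stand on its own, and as it stands it is a sketch whose decisive step is missing, as you acknowledge yourself. Your preliminary observations are correct: for an edge $xy$ of $G[R]$ one indeed gets $|N\setminus(N(x)\cup N(y))|\le 2$ and $N(x)\cap N(y)\cap N=\emptyset$, hence $R_0$ is independent and adjacent pairs nearly cover $N$ with disjoint traces. But these facts alone do not bound the clique-width; everything hinges on the unproved claim that the traces $N(u)\cap N$ for $u\in R_1$ fall into boundedly many classes and that the ``anomalous'' vertices can be deleted in constant number.

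That claim is false as stated. Take the once-subdivided star: $v$ adjacent to $a_1,\ldots,a_d$, and for each $i$ a vertex $u_i$ adjacent only to $a_i$. This graph is $(K_3,K_{1,3}+P_2)$-free (every claw is centred at $v$, and every edge outside a chosen claw meets a neighbour of $v$), yet it contains $d$ vertices of $R_1$ whose neighbourhoods in $N$ are pairwise distinct singletons, so the number of types grows with $d$, and these vertices are exactly the ones you propose to delete as ``$R_1$ vertices with small neighbourhood in $N$'' -- but Fact~\ref{fact:del-vert} only licenses deleting a number of vertices bounded by an absolute constant, and no constant number of bipartite complementations (Fact~\ref{fact:bip}) can uniformize $d$ pairwise distinct singleton attachments. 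In this particular example the offending vertices have degree~$1$ and could be stripped by Fact~\ref{fact:deg1}, but your plan neither invokes that fact nor shows that in general the low-trace vertices reduce to such removable ones (they may have further neighbours in $R_0\cup R_1$, and controlling those requires exactly the kind of case analysis you defer). So the core of the argument -- the reduction to a bipartite graph covered by Lemma~\ref{lem:bipartite} via boundedly many operations -- is a genuine gap, and filling it essentially requires redoing the structural analysis of~\cite{DLRR12} rather than the type-collapsing shortcut you propose.
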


\noindent
Finally, we also need the following lemma, which corresponds to the first lemma of~\cite{DGP14} by complementing the graphs under consideration.

\begin{lemma}[\cite{DGP14}]\label{lem:struct1b}
Let $s\geq 0$ and $t\geq 0$.
Then every $(\overline{sP_1+P_2},tP_1+P_2)$-free graph is $(K_{s+1},tP_1+P_2)$-free or
$(\overline{sP_1+P_2},(s^2(t-1)+2)P_1)$-free.
\end{lemma}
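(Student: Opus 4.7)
The plan is to argue by contrapositive: assume that $G$ is $(\overline{sP_1+P_2},tP_1+P_2)$-free and yet contains both a clique $K=\{k_0,\ldots,k_s\}$ of size $s+1$ and an independent set $I$ of size $s^2(t-1)+2$, and produce either an induced $\overline{sP_1+P_2}$ or an induced $tP_1+P_2$ for a contradiction. The key structural fact to exploit is that $\overline{sP_1+P_2}$ is exactly $K_{s+2}$ minus one edge, i.e.\ an $s$-clique together with two non-adjacent vertices each complete to it.

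For each $v\in I\setminus K$ I would examine $B_v=K\setminus N(v)$. If $|B_v|=1$, say $B_v=\{k\}$, then $K\cup\{v\}$ is an $(s+2)$-clique with the single edge $vk$ removed, which is already an induced $\overline{sP_1+P_2}$; so $\overline{sP_1+P_2}$-freeness forces $|B_v|\in\{0\}\cup\{2,\ldots,s+1\}$. Let $I_0=\{v\in I\setminus K:B_v=\emptyset\}$ and $I_2=(I\setminus K)\setminus I_0$. Two distinct $v,v'\in I_0$ together with any $s$ vertices of $K$ would likewise induce $\overline{sP_1+P_2}$, giving $|I_0|\leq 1$; and if some $w\in K\cap I$ exists then every $v\in I\setminus K$ has $w\in B_v$ (as $v,w\in I$ are non-adjacent), which forces $I_0=\emptyset$. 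Since $|K\cap I|\leq 1$, these two observations together yield $|I_2|\geq s^2(t-1)+1$ in every case.

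The final step is a pigeonhole on the $\binom{s+1}{2}$ pairs of vertices of $K$. Each $v\in I_2$ has $|B_v|\geq 2$ and therefore contributes at least one pair $\{k_i,k_j\}\subseteq B_v$, giving at least $|I_2|\geq s^2(t-1)+1$ incidences. As $s^2\geq \binom{s+1}{2}=s(s+1)/2$ for every $s\geq 1$, the total is strictly greater than $(t-1)\binom{s+1}{2}$, so some pair $\{k_i,k_j\}$ lies in $B_{v_\ell}$ for at least $t$ choices $v_1,\ldots,v_t\in I_2$. Then $\{v_1,\ldots,v_t,k_i,k_j\}$ is an induced $tP_1+P_2$: the edge is $k_ik_j$, the $v_\ell$ form an independent subset of $I$, and each $v_\ell$ is non-adjacent to both $k_i$ and $k_j$ by choice. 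The main obstacle is just bookkeeping the additive constant $+2$ (one unit absorbs a possible vertex of $K\cap I$, another absorbs the possible member of $I_0$, leaving $s^2(t-1)$ contributions to the pigeonhole) and verifying the inequality $s^2\geq \binom{s+1}{2}$ that makes the bound tight; alternatively, since the statement is the complement of the first lemma of \cite{DGP14}, one could apply that result to $\overline{G}$ and dualise, but the direct argument above is short and self-contained.
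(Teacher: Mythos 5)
Your proof is correct, but it follows a genuinely different route from the paper, because the paper gives no proof of Lemma~\ref{lem:struct1b} at all: it simply imports the statement from~\cite{DGP14}, remarking that it is the first lemma of that paper with both graphs in the pair complemented (exactly the ``dualise'' alternative you mention in passing). Your direct contrapositive argument is a self-contained replacement for that citation, and each step checks out: $|B_v|=1$ is impossible since $K\cup\{v\}$ would induce $K_{s+2}$ minus an edge, which is $\overline{sP_1+P_2}$; the bounds $|I_0|\leq 1$ and $|K\cap I|\leq 1$, together with the observation that $K\cap I\neq\emptyset$ forces $I_0=\emptyset$, give $|I_2|\geq s^2(t-1)+1$ in both cases; and the pigeonhole over the $\binom{s+1}{2}$ pairs of $K$, valid because $s^2\geq\binom{s+1}{2}$ when $s\geq 1$ and $t\geq 1$, yields $t$ independent vertices all avoiding both ends of an edge $k_ik_j$ of $K$, i.e.\ an induced $tP_1+P_2$. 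What your approach buys is self-containedness and a transparent constant --- in fact it proves the lemma with the smaller bound $\binom{s+1}{2}(t-1)+2$ in place of $s^2(t-1)+2$, which implies the stated version a fortiori; what the paper's approach buys is brevity, since given~\cite{DGP14} the complementation argument is one line. The only loose ends are the degenerate parameters allowed by the statement: for $s=0$ no vertex can have $|B_v|\geq 2$, so $|I_2|\geq 1$ is itself the contradiction (no pigeonhole needed), and for $t=0$ any $K_{s+1}$ with $s\geq 1$ already contains $P_2=tP_1+P_2$, while $s=t=0$ is settled by the observation that two independent vertices form $2P_1=\overline{0\cdot P_1+P_2}$; these one-line remarks are worth adding, but they do not affect your main argument, which is sound in the regime $s,t\geq 1$ actually used in the paper (there $s=2$, $t=3$, giving the bound $10$).
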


\section{The Clique Covering Lemma}\label{s-clique}

In Section~\ref{s-preliminaries} we stated several lemmas that can be used to bound the clique-width if we can manage to reduce to some specific graph class.
As we shall see, such a reduction is not always sufficient and
the following lemma forms a crucial part of our technique (we use it in the proofs of each of our three main boundedness results).

\begin{lemma}\label{lem:struct2}
Let  $k \geq 1$ be a constant and let $G$ be a $(\overline{2P_1+P_2},2P_2+P_4)$-free graph.
If $\overline{\chi}(G)\leq k$ then $\cw(G)\leq f(k)$ for some function $f$ that only depends on~$k$.
\end{lemma}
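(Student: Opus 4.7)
The plan is to reduce $G$ to a bounded-size graph using clique-width-preserving operations from Facts~\ref{fact:del-vert}--\ref{fact:bip}, via twin reduction inside each clique of the clique cover.

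First, fix a clique cover $Q_1, \ldots, Q_k$ of $G$ and apply subgraph complementation to each $Q_i$. Iterating Fact~\ref{fact:comp} at most $k$ times, it suffices to bound the clique-width of the resulting graph $G^*$, in which each $Q_i$ is an independent set and the bipartite graph between any two $Q_i, Q_j$ is unchanged. The diamond-freeness of $G$ then forces a rigid structure on each such bipartite graph: if $v \in Q_i$ had two neighbours $x, y \in Q_j$ and some other $v' \in Q_i$ were adjacent to $x$ but not to $y$, then $\{v, v', x, y\}$ would induce a diamond. Hence $Q_i$ partitions into a ``trivial'' class $A^{ij}_0$ of vertices with no neighbour in $Q_j$, together with classes $A^{ij}_1, \ldots, A^{ij}_{m_{ij}}$, each $A^{ij}_\ell$ being completely joined to a pairwise disjoint set $B^{ij}_\ell \subseteq Q_j$. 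Two vertices $u, v \in Q_i$ that lie in the same class of this partition for every $j \neq i$ are true twins in $G$, so we may keep only one representative per such equivalence class without changing $\cw(G)$.

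The crux, and main obstacle, is to bound the number of equivalence classes per $Q_i$ by some $N = N(k)$. Once achieved, the twin-reduced graph has at most $k N(k)$ vertices, its clique-width is at most $k N(k)$, and reversing the complementations gives $\cw(G) \leq f(k)$. The intended argument is Ramsey-type. Suppose $Q_i$ contains $M$ vertices $v_1, \ldots, v_M$ in pairwise distinct equivalence classes, with $M$ larger than a suitable $k$-dependent Ramsey number. Colouring the $\binom{M}{2}$ pairs by a distinguishing index $j \in \{1, \ldots, k\} \setminus \{i\}$ extracts, for a single $j$, a large sub-collection of $v_a$'s with pairwise distinct (and, after discarding at most one, non-trivial) $\pi_{ij}$-classes. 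Choose witnesses $w_a \in B^{ij}_{\ell_a}$; then $\{v_a, w_b, v_c, w_d\}$ induces a $2P_2$ in $G$ whenever $\{a, b\} \cap \{c, d\} = \emptyset$, since the only cross-edges inside this four-vertex set would be $v_a w_a$ or $v_c w_c$, both absent. Producing an induced $P_4$ in $G \setminus (Q_i \cup Q_j)$ that is anti-complete to such a $2P_2$ then yields a forbidden $2P_2 + P_4$. Iterating the Ramsey refinement across the $k - 2$ other cliques to extract this $P_4$ is the technical heart of the proof.

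For the base cases $k \leq 3$, the forbidden graph $2P_2 + P_4$ cannot appear at all: any eight vertices include three in a common clique $Q_i$, forming a $K_3$, which is not an induced subgraph of $2P_2 + P_4$. A separate argument is therefore needed in those cases: for $k \leq 2$ one checks directly that $G^*$ is $(K_{1,3}+P_2)$-free bipartite (a would-be induced $K_{1,3}+P_2$ in $G^*$ translates, via the within-clique complementations, into a diamond in $G$), and Lemma~\ref{lem:bipartite} finishes the job; an analogous direct analysis handles $k = 3$.
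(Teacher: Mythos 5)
Your reduction to a bounded-size graph via twin classes cannot work, because the claim you identify as the crux --- that the number of neighbourhood-equivalence classes inside each clique $Q_i$ is bounded by some $N(k)$ once $k\geq 4$ --- is false. Take two cliques $Q_1,Q_2$ of arbitrary size $n$ joined by a perfect matching, together with two isolated vertices $Q_3,Q_4$. This graph is $\overline{2P_1+P_2}$-free (every vertex has at most one neighbour in the opposite clique), it has clique covering number~$4$, and it is $(2P_2+P_4)$-free: every vertex of $2P_2+P_4$ has degree at least~$1$, so all eight vertices of a copy would have to lie in $Q_1\cup Q_2$, forcing four of them into one clique and hence a triangle, which $2P_2+P_4$ does not contain. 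Yet every vertex of $Q_1$ has a distinct matched partner, so all $n$ vertices of $Q_1$ lie in pairwise distinct classes. This also shows why the Ramsey step you defer ("producing an induced $P_4$ in $G\setminus(Q_i\cup Q_j)$ anti-complete to the $2P_2$") is not just technically hard but impossible in general: the remainder of the graph may contain no induced $P_4$ at all (here it is two isolated vertices), so many classes simply do not force a forbidden $2P_2+P_4$. The correct conclusion in such situations is not that the graph is small after twin reduction but that it is \emph{sparse}; its clique-width is bounded for a different reason (after complementing within the cliques one gets a graph of maximum degree at most~$2$), and your plan of bounding $\cw(G)$ by the number of vertices of a twin-reduced graph never gets off the ground.

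For comparison, the paper's proof establishes a sparsity dichotomy rather than a size bound: after deleting small cliques, merging cliques that are complete to each other, and deleting the at most one vertex per ordered pair that is complete to another clique (at most $k'(k'-1)$ deletions, allowed by Fact~\ref{fact:del-vert}), diamond-freeness forces every remaining vertex to have at most one neighbour in each other clique. If at most three cliques remain, complementing inside them (Fact~\ref{fact:comp}) gives maximum degree at most~$2$, and Lemma~\ref{lem:atmost2} applies; if at least four cliques remain, any single cross-edge can be extended, using one edge inside each of the two cliques it meets for the $P_4$ and one edge inside each of two further cliques for the $2P_2$, to an induced $2P_2+P_4$ --- so there are no cross-edges and the graph is a disjoint union of cliques. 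Your base-case analysis for $k\leq 2$ (translating an induced $K_{1,3}+P_2$ in the complemented graph back into a diamond in $G$ and invoking Lemma~\ref{lem:bipartite}) is sound in spirit, and your diamond-freeness observation about neighbourhoods across cliques is a correct first step, but without the paper's normalisation (merging cliques complete to each other and removing the exceptional "complete" vertices) even the structure between a pair of cliques is not as rigid as you state, and the main argument for $k\geq 4$ is missing and cannot be repaired along the proposed lines.
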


\begin{proof}
Let $k\geq 1$.
Suppose $G$ is  a $(\overline{2P_1+P_2},2P_2+P_4)$-free graph with
 $\overline{\chi}(G)\leq k$, that is, $V(G)$ can be partitioned into~$k$
cliques $X_1,\ldots,X_k$.
By Fact~\ref{fact:del-vert}, if any of these cliques
has less than~$k+7$ vertices, we may remove it. If two cliques $X_i,X_j$
are complete to
each other then they can be replaced by the single clique $X_i
\cup X_j$. After doing this exhaustively, we end up with $k' \leq k$ cliques
$Y_1,\ldots,Y_{k'}$, each of which is of size at least $k'+7$ and no two of
which are complete to each other.

Suppose a vertex $x \in Y_i$ has two neighbours $y_1,y_2$ in a different clique
$Y_j$. If~$x$ is non-adjacent to some vertex $y_3 \in Y_j$ then
$G[y_1,y_2,y_3,x]$ is a $\overline{2P_1+P_2}$. Thus~$x$ must be complete to
$Y_j$. If there is another vertex $x' \in Y_i$ which is complete to~$Y_j$, then
every vertex in $Y_j$ has at least two neighbours in $Y_i$, so $Y_i$ and $Y_j$
must be complete to each other, which we assumed was not the case. Therefore,
for any ordered pair $(Y_i,Y_j)$ every vertex of $Y_i$, except possibly one
vertex $x$, has at most one neighbour in $Y_j$. By Fact~\ref{fact:del-vert}, if
such vertices $x$ exist, we may delete them, since there are at most
$k'(k'-1)$ of them. We obtain a set of cliques $Z_1,\ldots,Z_{k'}$, all of
which have size at least $(k'+7)-(k'-1)=8$. Let $G_Z=G[Z_1 \cup \cdots \cup
Z_{k'}]$. We have shown that $G$ has bounded clique-width if and only if $G_Z$
does.

First suppose that $k' \leq 3$. Let $G_Z'$ be the graph obtained from $G_Z$ by
complementing the edges in each set $Z_i$.
As $G_Z'$ has maximum degree at most~2,  it has clique-width at most~4 by Lemma~\ref{lem:atmost2}.
By Fact~\ref{fact:comp}, $G_Z$ has
bounded clique-width if and only if $G_Z'$ does. Hence, $G_Z$, and thus $G$, has bounded clique-width.

Now suppose that $k' \geq 4$.  If $G_Z$ is a union of disjoint
cliques then its clique-width is at most 2. Otherwise, there must be two
vertices in different cliques $Z_i$ that are adjacent. Without loss of
generality, assume $x_6 \in Z_1$ and $x_7 \in Z_2$ are adjacent. We will show
that $G_Z$ (and therefore $G$) contains an induced $2P_2+P_4$, two vertices of
which are $x_6$ and $x_7$. Indeed, since $|Z_1| \geq 8$, there must be a vertex
$x_5 \in Z_1$ that is non-adjacent to $x_7$. Similarly, since $|Z_1| \geq 8$
there must be a vertex $x_8 \in Z_2$ that is non-adjacent to $x_5$ and $x_6$.
Now $G[x_5,x_6,x_7,x_8]$ is a~$P_4$.  Since $|Z_3| \geq 8$, there must be two
vertices $x_3,x_4 \in Z_3$ that are non-adjacent to $x_5, \ldots, x_8$. Since
$|Z_4| \geq 8$, there must be two vertices $x_1,x_2 \in Z_4$ that are
non-adjacent to $x_3, \ldots, x_8$. Now $G[x_1,\ldots,x_8]$ is a $2P_2+P_4$.
This contradiction completes the proof.\qed
\end{proof}
It is easy to see that for any fixed constant $s\geq 2$ we can generalize Lemma~\ref{lem:struct2} to be valid for $(\overline{2P_1+P_2},2K_s+P_4)$-free graphs.
By more complicated arguments it is also possible to generalize it to other graph classes, such as $(\overline{2P_1+P_2},K_s+P_6)$-free graphs for any fixed $s\geq 0$.
However, this is not necessary for the main results of this paper.

\section{The Proof of Theorem~\ref{t-main} (i)}\label{s-1}

Here is the proof of our first main result.

\medskip
\noindent
{\bf Theorem~\ref{t-main} (i).}
{\it The class of $(\overline{2P_1+P_2},3P_1+P_2)$-free graphs has bounded clique-width.}

\begin{proof}
Let $G$ be a $(\overline{2P_1+P_2},3P_1+P_2)$-free graph. Applying
Lemma~\ref{lem:struct1b} we find that $G$ is $(K_3,3P_1+P_2)$-free
or $(\overline{2P_1+P_2},10P_1)$-free.
If $G$ is $(K_3,3P_1+P_2)$-free then it has bounded clique-width by
Lemma~\ref{lem:K3-3P1+P2}, so we may assume it is
$(\overline{2P_1+P_2},10P_1,3P_1+P_2)$-free.

Suppose $G$ contains a $C_5$ (respectively $C_7$) on vertices $v_1,\ldots,v_5$
(respectively $v_1,\ldots,v_7$) in that order. Let $S_i$ be the set of vertices
that have $i$ neighbours on the cycle, but are not on the cycle itself.  Let
$v_i$ and $v_j$ be non-consecutive vertices of the cycle. The set $X$ of
vertices adjacent to both $v_i$ and $v_j$ must be independent, otherwise
$v_i,v_j$ and two adjacent vertices from $X$ would induce a
$\overline{2P_1+P_2}$. Since $G$ is $10P_1$-free, $|X| \leq 9$. Therefore, by
Fact~\ref{fact:del-vert}, we may delete all such vertices, of which there are
at most $9 \times 5 \times 2 \div 2=45$ (respectively $9 \times 7 \times 4 \div
2=126$). All remaining vertices must be adjacent to at most two vertices of the
cycle (so $S_i$ is empty for $i \geq 3$), and if a vertex is adjacent to two
vertices of the cycle, these two vertices must be consecutive vertices of the
cycle.

Suppose $x_1,x_2$ are adjacent to two consecutive vertices of the cycle, $v_i$
and~$v_j$, say. Then $x_1,x_2$ must be adjacent, otherwise $G[v_i,v_j,x_1,x_2]$
would be a $\overline{2P_1+P_2}$. Therefore $S_2$ can be partitioned into at
most five (respectively seven) cliques. Let $Y$ be the set of vertices,
adjacent to $v_1$ and none of the other vertices on the cycle.  If $x_1,x_2 \in
Y$ are non-adjacent then $G[x_1,x_2,v_2,v_4,v_5]$ would be a $3P_1+P_2$, so $Y$
must be a clique. Therefore $S_1$ can be partitioned into at most five
(respectively seven) cliques. Finally, note that if $x_1,x_2 \in S_0$ are
non-adjacent then $G[x_1,x_2,v_1,v_3,v_4]$ is a $3P_1+P_2$, so $S_0$ must be a
clique. By Fact~\ref{fact:del-vert}, we may delete the vertices
$v_1,\ldots,v_5$ (respectively $v_1,\ldots,v_7$).
This leaves a graph whose vertex set can be decomposed into $5+5+1=11$ (respectively
$7+7+1=15$) cliques, in which case we are done by Lemma~\ref{lem:struct2}.

We may therefore assume that $G$ contains no induced $C_5$ or $C_7$. Since $G$
is $(3P_1+P_2)$-free it contains no odd cycle on nine or more vertices. Since it
is $\overline{C_5}$-free (because $\overline{C_5}=C_5$), and
$\overline{2P_1+P_2}$-free, it contains no induced complements of odd cycles of
length 5 or more. By Theorem~\ref{thm:spgt} we find that $G$ must be perfect.
Then~$G$ has clique partition number at most $\alpha(G)$ by Lemma~\ref{l-clique}.
Since $G$ is $10P_1$-free, $\alpha(G) \leq 9$.
Applying Lemma~\ref{lem:struct2} completes the proof.\qed
\end{proof}

\section{The Proof of Theorem~\ref{t-main} (ii)}\label{s-2}

In this section we prove the second of our four main results.

\medskip
\noindent
{\bf Theorem~\ref{t-main} (ii).}
{\it The class of $(\overline{2P_1+P_2},2P_1+P_3)$-free graphs has bounded clique-width.}

\begin{proof}
Let $G$ be a $(\overline{2P_1+P_2},2P_1+P_3)$-free graph.
We need the following claim.

\medskip
\newcounter{ctrclaimone}
\phantomsection\refstepcounter{ctrclaimone}
\noindent
{\it Claim \thectrclaimone.\label{clm:1:1} Let $C$ and $I$ be a clique and independent set of $G$, respectively, with $C\cap I=\emptyset$.
Then there is a set $S \subseteq C\cup I$ containing at most four vertices, such that every edge with one end-vertex in $C$ and the other one in $I$  is incident to at least one vertex of $S$.}

\medskip
\noindent
We prove Claim~\ref{clm:1:1} as follows.
Assume $|I|,|C| \geq 5$, as otherwise we can simply set~$S$ to equal either $I$ or $C$ respectively.
Since $G$ is $\overline{2P_1+P_2}$-free, every vertex in~$I$ must be adjacent to zero, one
or all vertices of $C$. Since $G$ is $\overline{2P_1+P_2}$-free, at most one
vertex $z$ of $I$ can be complete to $C$. If such a vertex $z$ vertex exits,
let $I'=I \setminus\{z\}$, and add $z$ to $S$, otherwise let $I'=I$ and leave $S$ empty. Now $|I'| \geq 4$ and every
vertex of~$I'$ has at most one neighbour in $C$. It remains to show that it is
possible to disconnect~$I'$ and~$C$ by deleting at most three vertices (which we add to $S$).  If a
vertex $x$ in~$C$ has two neighbours and two non-neighbours in $I'$, then these
four vertices, together with~$x$ would induce a $2P_1+P_3$ in $G$.  If some
vertex of~$C$ is adjacent to all but at most one vertex of~$I'$, then since each
vertex of~$I'$ has at most one neighbour in~$C$, deleting at most two vertices
in~$C$ will disconnect $I'$ and~$C$.  We may therefore assume that each vertex
in~$C$ has at most one neighbour in $I'$.  Therefore the edges between~$I'$
and~$C$ form a matching.  If there are no edges between~$C$ and~$I'$ then we are
done.  Suppose $x \in I'$ is adjacent to $y \in C$. Since $|C| \geq 5$, we can
choose $y' \in C$ which is not adjacent to $x$. Since $|I'| \geq 4$, we can
choose $x',x'' \in I'$ which are non-adjacent to~$y$ and $y'$. However, then
$G[x',x'',x,y,y']$ is a $2P_1+P_3$, which is a contradiction. This completes the
proof of Claim~\ref{clm:1:1}.

\medskip
\noindent
Now suppose $G$ contains a $C_4$, say on vertices $v_1,v_2,v_3,v_4$ in order.
Let $X$ be the set of vertices non-adjacent to $v_1,v_2,v_3$ and $v_4$.  For $i
\in \{1,2,3,4\}$ let $W_i$ be the set of vertices adjacent to $v_i$, but
non-adjacent to all other vertices of the cycle.  For $i \in \{1,2\}$ let $V_i$
be the set of vertices not on the cycle that are adjacent to precisely
$v_{i-1}$ and $v_{i+1}$ on the cycle (throughout this part of the proof we
interpret subscripts modulo~4). For $i \in \{1,2,3,4\}$, let $Y_i$ be the
set of vertices adjacent to precisely $v_i$ and $v_{i+1}$ on the cycle. No
vertex can be adjacent to three or more vertices of the cycle, otherwise this
vertex together with three of its neighbours on the cycle would induce a
$\overline{2P_1+P_2}$ in $G$.

If $x,y\in W_i \cup X$ are non-adjacent then $G[x,y,v_{i+1},v_{i+2},v_{i+3}]$ is
a $2P_1+P_3$. Therefore $W_i \cup X$ is a clique. If $x,y \in Y_i$ are
non-adjacent then $G[v_i,v_{i+1},x,y]$ is a $\overline{2P_1+P_2}$. Therefore
$Y_i$ is a clique. If $x,y \in V_i$ are adjacent then $G[x,y,v_{i-1},v_{i+1}]$
is a $\overline{2P_1+P_2}$, so $V_i$ is an independent set. This means that the
vertex set of $G$ can be partitioned into a cycle on four vertices, eight
cliques and two independent sets. By Claim~\ref{clm:1:1},
after deleting the original cycle (four vertices) and at most $4 \times 2
\times 8 = 48$ vertices (which we may do by Fact~\ref{fact:del-vert}), we
obtain a graph whose vertex set is partitioned into eight cliques and two
independent sets such that the two independent sets are not in the same
components as the cliques. The components containing the cliques have bounded
clique-width by Lemma~\ref{lem:struct2}. The two independent sets form a
bipartite $(2P_1+P_3)$-free graph, which has bounded clique-width by
Lemma~\ref{lem:bipartite}. This completes the proof for the case where $G$ contains a
$C_4$.

We may now assume that $G$ is $(C_4,\overline{2P_1+P_2},2P_1+P_3)$-free. Because
$G$ is $(2P_1+P_3)$-free, it cannot contain a cycle on eight or more vertices.
Suppose it contains a cycle on vertices $v_1,\ldots,v_k$ in order, where $k \in
\{5,6,7\}$.  Let $X$ be the set of vertices with no neighbours on the cycle,
$W_i$ be the set of vertices adjacent to $v_i$, but no other vertices on the
cycle, $V_i$ be the set of vertices adjacent to $v_i$ and $v_{i+1}$, but no
other vertices of the cycle and if $v_i$ and $v_j$ are not consecutive vertices
of the cycle, let $V_{i,j}$ be the set of vertices adjacent to both $v_i$ and
$v_j$. (Throughout this part of the proof we interpret subscripts modulo $k$.
Note that a vertex may be in more than one set $V_{i,j}$.)

The set $X \cup W_i$
must be a clique, otherwise two non-adjacent vertices in
$X\cup W_i$ together with
$v_{i+1},v_{i+2},v_{i+3}$ would form a $2P_1+P_3$. The set $V_i$ must be a
clique, as otherwise two non-adjacent vertices in $V_i$, together with $v_i$ and
$v_{i+1}$ would from a $\overline{2P_1+P_2}$. The set $V_{i,j}$ cannot contain
two vertices, otherwise these two vertices, together with $v_i$ and $v_j$, would
form a $C_4$ or a $\overline{2P_1+P_2}$, depending on whether the two vertices
were non-adjacent or adjacent, respectively.
We delete all vertices from all the $V_{i,j}$ sets; we may do so  by
Fact~\ref{fact:del-vert} as there are at most $\frac{1}{2}k(k-3)$ of such vertices.
In this way we obtain a graph that can be partitioned into at
most $2k$ cliques. Therefore $G$ has bounded clique-width by
Lemma~\ref{lem:struct2}.

Finally, we may assume that $G$ contains no induced cycle on four or more
vertices. In other words, we may assume that $G$ is chordal. It remains to recall that
$(\overline{2P_1+P_2})$-free chordal graphs have bounded
clique-width by Lemma~\ref{lem:diamond-chordal}. This completes the proof.\qed
\end{proof}

\section{The Proof of Theorem~\ref{t-main} (iii)}\label{s-3}

In this section we prove the third of our four main results, namely that the class of $(\overline{2P_1+P_2},P_2+P_3)$-free graphs has bounded clique-width.
We first establish, via a series of lemmas, that we may restrict ourselves to graphs in this class that are also $(C_4,C_5,C_6,K_5)$-free.

\begin{lemma}\label{lem:diamond-p2p3-k5}
The class of those $(\overline{2P_1+P_2},P_2+P_3)$-free graphs that contain a~$K_5$ has bounded clique-width.
\end{lemma}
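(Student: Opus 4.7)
The plan is to exploit the $K_5$ to pin down a rigid global structure of~$G$, then either invoke a short bipartite-complementation argument or write down a $k$-expression with a bounded number of labels. Throughout, I work with a \emph{maximum} clique $K$ that contains the prescribed $K_5$; write $m:=|K|\ge 5$.

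First I would use the diamond constraint and the maximality of $K$ to prove that every vertex $x\in V(G)\setminus K$ has at most one neighbour in $K$: if $x$ had two neighbours $v_i,v_j\in K$ but missed some $v_k\in K$, then $G[v_i,v_j,v_k,x]\cong\overline{2P_1+P_2}$; so $x$ would be complete to $K$, contradicting maximality. This partitions $V(G)\setminus K$ into $X_0\cup X_1\cup\cdots\cup X_m$, where $X_0$ is the set of vertices with no neighbour in $K$ and, for $i\ge 1$, $X_i$ is the set of vertices whose unique neighbour in $K$ is $v_i$. Next I would combine $P_2+P_3$-freeness with $m\ge 5$ to conclude that $G\setminus K$ is $P_3$-free, hence a disjoint union of cliques: for any induced $P_3$ on vertices $a,b,c$ outside~$K$, the set $K\setminus\{v_{\ell(a)},v_{\ell(b)},v_{\ell(c)}\}$ has at least $m-3\ge 2$ elements, and any edge $v_\ell v_{\ell'}$ in that set is anti-complete to $\{a,b,c\}$, producing a forbidden $P_2+P_3$.

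Using Fact~\ref{fact:deg1} I would then recursively delete vertices of degree at most~$1$ from $G$; isolated $X_0$-vertices are irrelevant to clique-width and each $X_i$-vertex ($i\ge 1$) lying in a singleton component of $G\setminus K$ is a pendant of $v_i$, hence deletable. What remains is a graph in which every component of $G\setminus K$ is a clique of size at least~$2$. The core of the proof is a case analysis showing that these non-trivial cliques are severely restricted: if a non-trivial clique $C$ contains $c\in X_j$ for some $j\ge 1$ and a second non-trivial clique $C'\ne C$ has an edge $uu'$ with $\ell(u),\ell(u')\notin\{j\}$, then choosing $i\in\{1,\dots,m\}\setminus\{j,\ell(u),\ell(u')\}$ (possible since $m\ge 5$) yields the forbidden $P_2+P_3$ on $\{v_i,v_j,c\}\cup\{u,u'\}$. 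Iterating this observation leaves only the following configurations: (i) every non-trivial clique lies in $X_0$; (ii) every non-trivial clique lies in $X_0\cup X_j$ for a single fixed $j\ge 1$ and is ``mostly in $X_j$''; (iii) every non-trivial clique is an edge between $X_{j_1}$ and $X_{j_2}$ for a fixed pair $j_1\ne j_2$; or (iv) there is exactly one non-trivial clique $C^*$ and $|L(C^*)\cap\{1,\dots,m\}|\ge 3$.

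Finally, I would finish each case. In (i) there are no edges between $K$ and $G\setminus K$, so $G$ is a disjoint union of cliques and has clique-width at most~$2$. In (ii) and (iii), one or two bipartite complementations (between $\{v_j\}$ and $X_j$, respectively between $\{v_{j_s}\}$ and $X_{j_s}$ for $s=1,2$) remove every edge between $K$ and $G\setminus K$; the resulting graph is again a disjoint union of cliques, and Fact~\ref{fact:bip} applied a bounded number of times gives bounded clique-width of the original $G$. In case~(iv) I would construct a $k$-expression with $5$ labels directly: start by building $C^*\cap X_0$ using two labels $b,d$; then process $v_1,\dots,v_m$ one at a time, carrying labels $a$ (already-built $K$), $b$ (already-built $C^*$), $c$ (current $v_i$), $d$ (current $C^*\cap X_i$ being grown), and one extra label $d'$ used momentarily to clique-up each new $C^*\cap X_i$; after adding $v_i$ together with the clique $C^*\cap X_i$ (joining $d$ to both $c$ and $b$), relabel $c\mapsto a$ and $d\mapsto b$. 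The principal obstacle is the structural case analysis that forces the outside cliques into the four patterns above; once that is done, both the bipartite-complementation reduction and the $5$-label construction are straightforward.
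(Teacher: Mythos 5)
Your proposal is correct, and although its opening coincides with the paper's proof (a maximal clique $K$ of size at least~$5$; every outside vertex has at most one neighbour in $K$; $G\setminus K$ is $P_3$-free, hence a disjoint union of cliques; trivial components removed via Fact~\ref{fact:deg1}), the second half takes a genuinely different route. The paper's key claim is that, when there are at least two outside cliques, any vertex of $K$ with a neighbour outside has at most one non-neighbour in each outside clique; since each outside vertex has at most one neighbour in $K$, at most two vertices of $K$ have any outside neighbours, and deleting them (Fact~\ref{fact:del-vert}) leaves a disjoint union of cliques, while the case of at most one outside clique is finished by observing that $\overline{G}$ is a $(2P_1+P_2)$-free bipartite graph and invoking Fact~\ref{fact:comp} and Lemma~\ref{lem:bipartite}. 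You instead classify the outside cliques by their attachment vertices in $K$: your $P_2+P_3$ observation (which checks out) forces either that at most two attachment vertices $v_{j_1},v_{j_2}$ occur in total, in which case one or two bipartite complementations between $\{v_{j_s}\}$ and $X_{j_s}$ disconnect $K$ from the rest and leave a disjoint union of cliques, or that there is a single outside clique joined to $K$ by a matching, which you handle by a direct bounded-label expression. Your version avoids Lemma~\ref{lem:bipartite} entirely (in the matching case one could also just complement the two cliques and apply Lemma~\ref{lem:atmost2}), at the price of a longer case analysis; the paper's ``at most two vertices of $K$ see the outside'' claim is the shorter path to the same end. Two small inaccuracies to patch, neither fatal: your list (i)--(iv) as literally stated misses the configuration of a single outside clique meeting exactly two sets $X_{j_1},X_{j_2}$ without being an edge between them (the same two bipartite complementations dispose of it), and in the $5$-label construction you must also join the label of the current $v_i$ to the label of the already-built part of $K$, not only the label of $C^*\cap X_i$ to those of $v_i$ and of the built part of $C^*$.
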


\begin{proof}
Let $G$ be a $(\overline{2P_1+P_2},P_2+P_3)$-free graph.
Let $X$ be a maximal (by set inclusion) clique in $G$ containing at least five
vertices. Since $X$ is maximal and $(\overline{2P_1+P_2})$-free, every vertex not
in $X$ has at most one neighbour in~$X$. By Fact~\ref{fact:deg1} we may
therefore assume that every component of $G \setminus X$ contains at least two
vertices.

Suppose there is a $P_3$ in $G \setminus X$, say on vertices $x_1,x_2,x_3$ in
that order. Since $|X| \geq 5$, we can find $y_1,y_2 \in X$ none of which are
adjacent to any of $x_1,x_2,x_3$. Then $G[y_1,y_2,x_1,x_2,x_3]$ is a $P_2+P_3$.
Hence $G \setminus X$ is $P_3$-free and must therefore be a union of disjoint
cliques $X_1,\ldots,X_k$.  Suppose there is only at most one such clique. Then
$\overline{G}$ is a $(2P_1+P_2)$-free bipartite graph, and so $G$ has bounded
clique-width by Fact~\ref{fact:comp} and Lemma~\ref{lem:bipartite}.
From now on we assume that $k\geq 2$, that is, $G \setminus X$ contains at least two cliques.

Suppose that some vertex $x\in X$ is adjacent to a vertex $y \in X_i$. We claim
that~$x$ can have at most one non-neighbour in any $X_j$.
First suppose $j \neq i$.
For contradiction, assume that $x$ is non-adjacent to $z_1,z_2 \in
X_j$, where $j \neq i$.  Since $|X| \geq 5$ and each vertex that is not in $X$
has at most one neighbour in~$X$, there must be a vertex $x' \in X$ that is
non-adjacent to $y,z_1$ and $z_2$. Then $G[z_1,z_2,x',x,y]$ is a $P_2+P_3$, a contradiction.
Now suppose $j=i$.
Since $k \geq 2$, there must
be another clique $X_j$ with $j \neq i$. Since $X_j$ must contain at least two
vertices and $x$ can have at most one non-neighbour in $X_j$, there must be a
neighbour $y'$ of $x$ in $X_j$. By the same argument as above, $x$ can
therefore have at most one non-neighbour in $X_i$. We conclude that if some
vertex $x$ has a neighbour in $\{X_1,\ldots,X_k\}$ then it has at most one
non-neighbour in each $X_j$.

As every vertex in every $X_i$ has at most one neighbour in $X$, this means
that at most two vertices in $X$ have a neighbour in $X_1\cup \cdots \cup X_k$.
Therefore, by deleting at most two vertices of $X$, we obtain a graph which is
a disjoint union of cliques and therefore has clique-width at most 2. Therefore
by Fact~\ref{fact:del-vert}, the clique-width of $G$ is bounded, which
completes the proof.\qed
\end{proof}

\begin{lemma}\label{lem:diamond-p2p3-c5}
The class of those $(\overline{2P_1+P_2},P_2+P_3,K_5)$-free graphs that contain an induced $C_5$ has
bounded clique-width.
\end{lemma}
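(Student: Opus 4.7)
The plan is to fix an induced $C_5$ on vertices $v_1,\ldots,v_5$ (in cyclic order) and classify each off-cycle vertex $x$ by the set $S_x=N(x)\cap V(C_5)$. Using that $G$ is $\overline{2P_1+P_2}$-free, I first establish that $|S_x|\leq 3$, and that a three-element $S_x$ cannot consist of three consecutive cycle vertices (in either case $x$ together with the cycle would induce a diamond); since $\alpha(C_5)=2$, every three-element $S_x$ must then contain a consecutive pair. This yields a partition of $V(G)\setminus V(C_5)$ into at most $21$ classes $V_S$.

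Next I would dispose of every class $V_S$ in which $S$ contains two consecutive cycle vertices $v_i,v_{i+1}$. If two vertices $x,y\in V_S$ were non-adjacent, $G[x,y,v_i,v_{i+1}]$ would be an induced $\overline{2P_1+P_2}$, so $V_S$ is a clique; together with $v_i,v_{i+1}$ it forms a clique in $G$, and $K_5$-freeness gives $|V_S|\leq 2$. There are at most ten such classes, contributing at most twenty vertices in total, which I delete together with the five cycle vertices using Fact~\ref{fact:del-vert}. The resulting graph $G'$ is partitioned into at most eleven classes: $V_\emptyset$, the five sets $V_{\{v_i\}}$, and the five sets $V_{\{v_i,v_{i+2}\}}$. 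Each of $V_\emptyset$ and $V_{\{v_i\}}$ is independent by a short $P_2+P_3$-freeness argument, since $V(C_5)\setminus S$ still contains an induced $P_3$ anti-complete to $V_S$; each $V_{\{v_i,v_{i+2}\}}$ is independent by applying diamond-freeness to $\{x,y,v_i,v_{i+2}\}$.

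The final step is to argue that $G'$ has bounded clique-width. The key observation is that for any pair of the eleven independent classes $V_S,V_T$, the induced bipartite subgraph between them is $P_2+P_3$-free, because it is an induced subgraph of $G$ and each class is independent, so all its edges already go between the two classes. Since $K_{1,3}+P_2$ itself contains an induced $P_2+P_3$ (take the induced $P_3$ inside the claw plus the disjoint edge), every such bipartite subgraph is $(K_{1,3}+P_2)$-free bipartite and hence has bounded clique-width by Lemma~\ref{lem:bipartite}. A case analysis based on the position of $S\cup T$ inside $V(C_5)$ then refines the structure: whenever $V(C_5)\setminus(S\cup T)$ contains three consecutive cycle vertices the two classes are anti-complete; whenever it contains a $P_2$ they induce only a matching; and otherwise (only the pairs with $|S\cup T|\in\{3,4\}$ and complement a non-edge) a separate short argument based on diamond- and $K_5$-freeness bounds the remaining bipartite structure. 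Using these observations, I would apply a bounded sequence of vertex deletions, subgraph complementations, and bipartite complementations (Facts~\ref{fact:del-vert}--\ref{fact:bip}) to reduce $G'$ to a bipartite $(K_{1,3}+P_2)$-free graph, to which Lemma~\ref{lem:bipartite} applies. The main obstacle is the last reduction step: the ``matching'' edges between the classes $V_{\{v_i\}}$ already form a five-cycle pattern on the type level, so a careful organisation of the bipartite complementations is needed to show that a constant number of them really does eliminate all odd cycles in $G'$, in particular for the hard pairs $V_{\{v_i\}}\times V_{\{v_{i+2},v_{i+4}\}}$ and $V_{\{v_i,v_{i+2}\}}\times V_{\{v_{i+1},v_{i+3}\}}$.
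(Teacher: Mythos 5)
Your set-up matches the paper's: classify off-cycle vertices by their cycle neighbourhood, use $\overline{2P_1+P_2}$- and $K_5$-freeness to delete the (constantly many) vertices adjacent to a consecutive pair, observe that the surviving classes are independent sets, and record matching/anti-complete relations between pairs of classes. But the decisive part of the lemma is exactly the step you defer: showing that the union of the remaining independent classes, with its matching and co-matching interactions, can be reduced by a \emph{constant} number of operations to something of bounded clique-width. You explicitly flag this as ``the main obstacle'' and only promise that ``a careful organisation of the bipartite complementations'' will eliminate the odd interaction cycles; no such organisation is given, and it is not routine. In the paper this is where the real work happens: one distinguishes \emph{large} classes (at least three vertices) from small ones (deleted by Fact~\ref{fact:del-vert}), and proves three further structural claims -- if $V_i$ is large then $X$ is anti-complete to $V_{i-2}\cup V_{i+2}$, if $V_i$ is large then $V_{i-1}$ is anti-complete to $V_{i+1}$, and if three consecutive sets are large then the middle one is complete to its neighbours -- after which complementing the edges between consecutive large sets leaves a graph of maximum degree at most~$2$, handled by Lemma~\ref{lem:atmost2}. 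Without some substitute for this large/small analysis, your claimed reduction to a bipartite $(K_{1,3}+P_2)$-free graph is unsubstantiated. Note also that your ``key observation'' that each \emph{pair} of classes induces a bipartite graph of bounded clique-width buys nothing on its own: pairwise bounded clique-width does not combine (a $1$-subdivided wall is partitioned into three independent sets any two of which induce a forest, yet has unbounded clique-width), so the entire burden rests on the unfinished final step.

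Two smaller points. First, you miss that each class $V_{\{v_i\}}$ contains at most one vertex: two vertices whose unique cycle neighbour is $v_i$ yield a $P_2+P_3$ whether they are adjacent (with $v_{i+1}v_{i+2}v_{i+3}$) or non-adjacent (the path through $v_i$ together with the edge $v_{i+2}v_{i+3}$). The paper deletes these vertices outright by Fact~\ref{fact:del-vert}, which disposes of one of your two ``hard pairs'' immediately; the genuinely hard interaction is only the co-matching between consecutive two-neighbour classes, i.e.\ your pair $V_{\{v_i,v_{i+2}\}}\times V_{\{v_{i+1},v_{i+3}\}}$. Second, the ``separate short argument based on diamond- and $K_5$-freeness'' you invoke for the leftover pairs does give the matching property for opposite classes, but it does not resolve the consecutive co-matching case, which is precisely where the completeness/anti-completeness claims for large sets are needed.
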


\begin{proof}
Let $G$ be a $(\overline{2P_1+P_2},P_2+P_3,K_5)$-free graph containing a
$C_5$, say on vertices $v_1,v_2,v_3,v_4,v_5$ in order. Let $Y$ be
the set of vertices adjacent to $v_1$ and~$v_2$ (and possibly other vertices on
the cycle). If $y_1,y_2 \in Y$ are non-adjacent then $G[v_1,v_2,y_1,y_2]$ would
be a $\overline{2P_1+P_2}$. Therefore $Y$ is a clique. Since $G$ is
$K_5$-free,~$Y$ contains at most four vertices. Therefore by
Fact~\ref{fact:del-vert} we may assume that no vertex in $G$ has two
consecutive neighbours on the cycle.  This also means that no vertex has three
or more neighbours on the cycle.  For $i \in \{1,2,3,4,5\}$,  let~$V_i$ be the
set of vertices not on the cycle that are adjacent to $v_{i-1}$ and $v_{i+1}$,
but non-adjacent to all other vertices of the cycle (subscripts are interpreted
modulo~5 throughout this proof).  Suppose there are two vertices $x,y$, both of
which are adjacent to the same vertex on the cycle, say $v_1$, and non-adjacent
to all other vertices of the cycle. If $x$ and $y$ are adjacent, then
$G[x,y,v_2,v_3,v_4]$ is a $P_2+P_3$, otherwise $G[v_3,v_4,x,v_1,y]$ is a
$P_2+P_3$. This contradiction means that there is at most one vertex whose only
neighbour on the cycle is $v_1$. By Fact~\ref{fact:del-vert}, we may therefore
assume that there is no vertex with exactly one neighbour on the cycle. Let $X$
be the set of vertices with no neighbours on the cycle. Note that every vertex
not on the cycle is either in $X$ or in some set $V_i$.

Now $X$ must be an independent set, since if two vertices in $x_1,x_2 \in X$
are adjacent, then $G[x_1,x_2,v_1,v_2,v_3]$ would induce a $P_2+P_3$ in $G$.
Also, $V_i$ must be an independent set, since if $x,y \in V_i$ are adjacent
then $G[x,y,v_{i-1},v_{i+1}]$ is a $\overline{2P_1+P_2}$.

We say that two sets $V_i$ and $V_j$
are \emph{consecutive} (respectively \emph{opposite}) if $v_i$ and~$v_j$ are
distinct adjacent (respectively non-adjacent) vertices of the cycle.
We say that a set $X$ or $V_i$ is \emph{large} if it contains at least three
vertices, otherwise it is \emph{small}.
We say that a bipartite graph with bipartition classes $A$ and $B$ is a {\it matching} ({\it co-matching})
if every vertex in $A$ has at most one neighbour (non-neighbour) in~$B$, and vice versa.

We now prove a series of claims about the edges between these sets.

\begin{enumerate}
\item \emph{$G[V_i \cup X]$ is a matching.} Indeed if some vertex $x$ in $V_i$
(respectively $X$) is adjacent to two vertices $y_1,y_2$ in $X$ (respectively
$V_i$), then $G[v_{i+2},v_{i+3},y_1,x,y_2]$ is a $P_2+P_3$.

\smallskip
\item \emph{If $V_i$ and $V_j$ are opposite then $G[V_i\cup V_j]$ is a matching.}
Suppose for contradiction that $x \in V_1$ is adjacent to two vertices $y,y'
\in V_3$.  Then $G[v_2,x,y,y']$ would be a $\overline{2P_1+P_2}$, a
contradiction.

\smallskip
\item \emph{If $V_i$ and $V_j$ are consecutive then $G[V_i \cup V_j]$ is a
co-matching.} Suppose for contradiction that $x \in V_1$ is non-adjacent to two
vertices $y,y' \in V_2$. Then $G[x,v_5,y,v_3,y']$ is a $P_2+P_3$, a
contradiction.

\smallskip
\item \emph{If $V_i$ is large then $X$ is anti-complete to $V_{i-2} \cup
V_{i+2}$.} Suppose for contradiction that $V_3$ is large and $x \in X$ has a
neighbour $y \in V_1$.  Then since $V_3$ is large and both $G[X \cup V_3]$ and
$G[V_1 \cup V_3]$ are matchings, there must be a vertex $z \in V_3$ that is
non-adjacent to both $x$ and $y$. Then $G[x,y,v_3,v_4,z]$ is a $P_2+P_3$, a
contradiction.

\smallskip
\item \emph{If $V_i$ is large then $V_{i-1}$ is anti-complete to $V_{i+1}$.}
Suppose for contradiction that $V_2$ is large and $x \in V_1$ has a neighbour
$y \in V_3$. Since $V_2$ is large and each vertex in $V_1 \cup V_3$ has at most
one non-neighbour in $V_2$, there must be a vertex $z \in V_2$ that is adjacent
to both $x$ and $y$. Now $G[x,y,v_2,z]$ is a $\overline{2P_1+P_2}$, a
contradiction.

\smallskip
\item \emph{If $V_{i-1},V_{i},V_{i+1}$ are large then $V_i$ is complete to
$V_{i-1} \cup V_{i+1}$.} Suppose for contradiction that $V_1,V_2,V_3$ are large
and some vertex $x \in V_1$ is non-adjacent to a vertex $y \in V_2$.  Since
$V_3$ is large and $G[V_2 \cup V_3]$ is a co-matching, there must be two
vertices $z,z' \in V_3$, adjacent to $y$. By the previous claim, since~$V_2$ is
large, $z,z'$ must be non-adjacent to $x$.  Therefore $G[x,v_5,z,y,z']$ is a
$P_2+\nobreak P_3$, which is a contradiction.
\end{enumerate}

By Fact~\ref{fact:del-vert} we may delete the vertices
$v_1,\ldots, v_5$
and all vertices in every small set~$X$ or $V_i$. Let $G'$ be the
graph obtained from the resulting graph by complementing the edges between any
two consecutive $V_i,V_j$. By Fact~\ref{fact:bip}, $G'$ has bounded
clique-width if and only if $G$ does. If at most three of $V_1,\ldots,V_5,X$
are large, then $G'$ has maximum degree at most~2 and we are done by
Lemma~\ref{lem:atmost2}.
We may therefore assume that at
least
four of $V_1,\ldots,V_5,X$ are large, so at
least three of $V_1,\ldots,V_5$ are large.

First suppose there is an edge in $G$ between a vertex in $X$ and a vertex in~$V_i$ for some $i$.
Then $V_{i-2},V_{i+2}$ must be small
(and as such we already removed them).
Consequently, $V_{i-1},V_i,V_{i+1}$ must be large.
However, in this case, every large~$V_j$ is either complete or anti-complete to every other large $V_{j'}$ in $G$ and~$X$
is anti-complete  to $V_{i-1} \cup V_{i+1}$ in $G$. Therefore $G'$ has maximum
degree at most~1
implying that $G'$, and thus $G$, has bounded clique-width by Lemma~\ref{lem:atmost2}.

Now suppose that there are no edges in $G$ between any vertex in $X$ and any vertex in $V_i$ for all $i$. Since $X$ is an independent set, every vertex in $X$ forms a component in $G$ of size~1. We can therefore delete every vertex in $X$ without affecting the clique-width of $G$. That is, in this case we may assume that $X$ is not large.
In this case, as stated above, we may assume that at least four of $V_1,\ldots,V_5$ are large. We may without loss
of generality assume that these sets are $V_1,\ldots,V_4$, whereas $V_5$ may or may not be large.
If $V_5$ is large, then every large $V_i$ is either complete or anti-complete to every other large
$V_j$ in $G$. If $V_5$ is small (and as such not in $G'$) then the same holds
 with the possible exception of~$V_1$ and $V_4$. Hence $G'$ has
maximum degree at most~1 implying that $G'$, and thus~$G$, has bounded clique-width by Lemma~\ref{lem:atmost2}.
This completes the proof.\qed
\end{proof}

\begin{lemma}\label{lem:diamond-p2p3-c4}
The class of those $(\overline{2P_1+P_2},P_2+P_3,K_5,C_5)$-free graphs that contain an induced $C_4$ has
bounded clique-width.
\end{lemma}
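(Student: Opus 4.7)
\medskip
\noindent
\textbf{Proof plan.} The plan is to mirror the structural approach of Lemma~\ref{lem:diamond-p2p3-c5}. Fix an induced $C_4$ on vertices $v_1,v_2,v_3,v_4$ in cyclic order, partition $V(G)\setminus\{v_1,\ldots,v_4\}$ according to neighbourhoods on the cycle, derive structural constraints on each class and on the edges between classes, and then reduce to a graph of bounded clique-width by a bounded number of vertex deletions and bipartite complementations. The first observation is that $\overline{2P_1+P_2}$-freeness prevents any external vertex from having three or more neighbours on the cycle, since any three neighbours of an external vertex on $C_4$, together with that vertex, induce a diamond. Consequently, every external vertex falls into one of the classes $X$ (no neighbour on the cycle), $W_i$ (only $v_i$), $Y_i$ (the consecutive pair $\{v_i,v_{i+1}\}$, indices taken mod~$4$), or $D_{13}, D_{24}$ (a diagonal pair).

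I would next establish the basic structural claims. Each $Y_i$ is a clique by diamond-freeness, and has size at most~$2$ by $K_5$-freeness (since $Y_i\cup\{v_i,v_{i+1}\}$ is a clique); analogously, $D_{13}$ and $D_{24}$ are independent sets. Each of $X, W_1,\ldots,W_4$ is independent: an edge $xx'$ in $X$, together with the $P_3$ on $v_1,v_2,v_3$, induces a $P_2+P_3$, and similarly for an edge in $W_i$ using the $P_3$ on the three non-$v_i$ cycle vertices. By Fact~\ref{fact:del-vert} we delete the at most $8$ vertices of the $Y_i$'s. For cross-class edges, $(P_2+P_3)$-freeness yields $X$ anti-complete to every $W_i$, and $C_5$-freeness yields $W_i$ anti-complete to $W_{i+2}$, since an edge $ww'$ between them would close to an induced $C_5$ through $w\,v_i\,v_{i+1}\,v_{i+2}\,w'$. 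An analogous but more delicate case analysis of the edges incident to $D_{13}$ and $D_{24}$, using combinations of diamond-, $(P_2+P_3)$- and $C_5$-freeness, should yield (co-)matching/complete/anti-complete structures between these diagonal classes and the $W_i, X$ classes, modulo deletion of a bounded number of exceptional vertices. Bipartite-complementing between the remaining pairs of classes via Fact~\ref{fact:bip} should then reduce the graph to one of maximum degree at most~$2$, to which Lemma~\ref{lem:atmost2} applies; any bipartite residue can be handled by Lemma~\ref{lem:bipartite}, and any residue with bounded clique cover number by Lemma~\ref{lem:struct2}.

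The main obstacle will be the detailed control of the edges incident to the two diagonal classes $D_{13}, D_{24}$. In Lemma~\ref{lem:diamond-p2p3-c5} every external vertex with two neighbours on the $C_5$ lies in a ``consecutive'' class, so these classes interact in a symmetric way around the cycle; here, by contrast, the $C_4$ has genuinely different consecutive and diagonal pairs, and the two diagonal classes can each be unboundedly large without immediately producing small forbidden induced subgraphs with the cycle alone. Teasing out the exact constraints on edges between $D_{13}$ and $D_{24}$, and between each $D_{ij}$ and $W_k$ or $X$, by combining $C_5$-free witnesses (through cycle-vertex-diagonal-vertex paths of length~$4$) with $(P_2+P_3)$-free witnesses, will be the crux of the argument, enabling the final bipartite-complementation reduction.
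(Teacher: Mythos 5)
Your set-up matches the paper's opening moves: the same neighbourhood partition with respect to the $C_4$, the observation that no external vertex has three neighbours on the cycle, the clique/independence facts for the consecutive-pair classes, the diagonal classes, the $W_i$ and $X$, the deletion of the boundedly many consecutive-pair vertices, and the anti-completeness of $X$ to the $W_i$. But the part you yourself flag as ``the crux'' -- the control of the edges incident to the diagonal classes $D_{13},D_{24}$ and to $X$ -- is exactly where the real proof lives, and your proposal offers only the hope that a ``(co-)matching/complete/anti-complete'' structure will emerge and that bipartite complementations will then leave a graph of maximum degree at most~$2$. That hope is not justified, and in fact the endpoint you predict is wrong: a vertex of a diagonal class can have unboundedly many neighbours in $W_i\cup X$, each of which has exactly one neighbour in that diagonal class, i.e.\ a star-like attachment. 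This is neither a matching nor a co-matching between the two classes, so no bounded number of bipartite complementations between whole classes can reduce the degree, and Lemma~\ref{lem:atmost2} is not the right target. The analogy with the $C_5$ case breaks precisely here.

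What is actually needed (and what the paper does) is a second, longer phase that you have not supplied: show that opposite singleton-neighbour classes interfere, so that w.l.o.g.\ $W_3=W_4=\emptyset$; split off the $W_1$--$W_2$ adjacencies by bipartite complementations into a bipartite $(P_2+P_3)$-free piece; isolate the set $X_0$ of $X$-vertices with neighbours in both diagonal classes and prove that each such vertex has exactly one neighbour in each diagonal class, that diagonal vertices with a neighbour in $X$ are complete to the other diagonal class, and that (after bounded deletions) each diagonal vertex has at most one neighbour in $X_0$ and the remaining $W_i/X_i$-vertices attach to the relevant diagonal vertices either pendantly or completely. Only then do a carefully chosen sequence of bipartite complementations disconnect the graph into a bipartite $(P_2+P_3)$-free part (Lemma~\ref{lem:bipartite}) and a part that, after complementing between the two sets of ``marked'' diagonal vertices, is a disjoint union of trees of possibly unbounded degree, handled by Lemma~\ref{lem:tree} rather than by a degree bound. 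None of these steps follows routinely from the facts you established, so as it stands the proposal is a plan with the decisive half of the argument missing.
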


\begin{proof}
Suppose that $G$ is a $(\overline{2P_1+P_2},P_2+P_3,K_5,C_5)$-free graph
containing a~$C_4$, say on vertices $v_1,v_2,v_3,v_4$ in order.  Let $Y$ be the
set of vertices adjacent to~$v_1$ and $v_2$ (and possibly other vertices on the
cycle). If $y_1,y_2 \in Y$ are non-adjacent then $G[v_1,v_2,y_1,y_2]$ would be
a $\overline{2P_1+P_2}$. Therefore $Y$ is a clique. Since~$G$ is $K_5$-free, there are at most four such
vertices.  Therefore by Fact~\ref{fact:del-vert} we may assume that no vertex
in $G$ has two consecutive neighbours on the cycle.  For $i \in \{1,2\}$ let
$V_i$ be the set of vertices outside the cycle adjacent to $v_{i+1}$ and~$v_{i+3}$ (where $v_5=v_1$).
For $i \in \{1,2,3,4\}$ let $W_i$ be the set of vertices whose unique neighbour
on the cycle is $v_i$. Let $X$ be the set of vertices with no neighbours on the
cycle.

We first prove the following properties:
\newcounter{propctr}
\renewcommand{\thepropctr}{(\roman{propctr})}
\begin{enumerate}[(i)]
\item [\phantomsection\refstepcounter{propctr}\thepropctr\label{prop:vi-indep}]  $V_i$ are independent sets for $i=1,2$.
\item [\phantomsection\refstepcounter{propctr}\thepropctr\label{prop:wi-indep}] $W_i$ are independent sets for $i=1,2,3,4$.
\item [\phantomsection\refstepcounter{propctr}\thepropctr\label{prop:x-indep}] $X$ is an independent set.
\item [\phantomsection\refstepcounter{propctr}\thepropctr\label{prop:x-wi-anti}] $X$ is anti-complete to $W_i$ for $i=1,2,3,4$.
\item [\phantomsection\refstepcounter{propctr}\thepropctr\label{prop:w3-empty}] Without loss of generality $W_3=\emptyset$ and $W_4=\emptyset$.
\item [\phantomsection\refstepcounter{propctr}\thepropctr\label{prop:w1-w2-anti}] Without loss of generality $W_1$ is anti-complete to $W_2$.
\end{enumerate}
To prove Property~\ref{prop:vi-indep}, if $x,y \in V_i$ are adjacent then $G[x,y,v_{i+1},v_{i+3}]$ is a  $\overline{2P_1+P_2}$.
For $i=1,\ldots,4$, the set $W_i \cup X$ must also be independent, since if $x,y \in W_1 \cup X$ were adjacent
then $G[x,y,v_2,v_3,v_4]$ would be a $P_2+P_3$. This proves Properties~\ref{prop:wi-indep}--\ref{prop:x-wi-anti}.

To prove Property~\ref{prop:w3-empty}, suppose that $x\in W_1$ and $y \in W_3$ are adjacent. In that case $G[v_1,v_2,v_3,y,x]$
would be a $C_5$. This contradiction means that no vertex of~$W_1$ is adjacent
to a vertex of $W_3$. Now suppose that $x,x' \in W_1$ and $y \in W_3$. Then
$G[y,v_3,x,v_1,x']$ would be a $P_2+P_3$ by Property~\ref{prop:wi-indep}. Therefore, if both~$W_1$ and $W_3$ are
non-empty, then they each contain at most one vertex and we can delete these
vertices by Fact~\ref{fact:del-vert}. Without loss of generality we may
therefore assume that $W_3$ is empty. Similarly, we may assume $W_4$ is empty. Hence we have shown Property~\ref{prop:w3-empty}.

We are left to prove Property~\ref{prop:w1-w2-anti}.
Suppose that $x \in W_1$ is adjacent to $y \in W_2$. Then~$x$ cannot have a
neighbour in~$V_2$. Indeed, suppose for contradiction that $x$ has a neighbour
$z \in V_2$. Then $G[x,z,y,v_1]$ is a $\overline{2P_1+P_2}$ if $y$ and $z$ are
adjacent, and $G[x,y,v_2,v_3,z]$ is a $C_5$ if $y$ and $z$ are not adjacent.
By symmetry,~$y$ cannot have a neighbour in $V_1$.
Now $y$ must be complete to~$V_2$.  Indeed, if $y$ has a non-neighbour $z \in
V_2$ then $G[x,y,z,v_3,v_4]$ is a $P_2+P_3$. By symmetry, $x$ is complete to~$V_1$.
Recall that $W_1\cup X$ is an independent set by Properties~\ref{prop:wi-indep}--\ref{prop:x-wi-anti}.
We conclude that any vertex in $W_1$ with a neighbour in $W_2$ is complete to $V_1$ and anti-complete to $V_2\cup X$.
Similarly, any vertex in $W_2$ with a neighbour in~$W_1$ is complete to $V_2$ and anti-complete to
$V_1\cup X$.

Let $W_1^*$ (respectively $W_2^*$) be the set of vertices in $W_1$ (respectively $W_2$) that have a neighbour in $W_2$ (respectively $W_1$).
Then, by Fact~\ref{fact:bip}, we may apply two bipartite complementations, one
between $W_1^*$ and $V_1 \cup \{v_1\}$ and the other between~$W_2^*$ and $V_2 \cup \{v_2\}$.
After these operations, $G$ will be split into two disjoint
parts, $G[W_1^* \cup W_2^*]$ and $G \setminus (W_1^* \cup W_2^*)$, both of
which are induced subgraphs of~$G$. The first of these is a bipartite
$(P_2+P_3)$-free graph and therefore has bounded clique-width  by
Lemma~\ref{lem:bipartite}. We therefore only need to consider the second graph
$G \setminus (W_1^* \cup W_2^*)$. In other words, we may assume without loss of generality that~$W_1$ is
anti-complete to $W_2$. This proves Property~\ref{prop:w1-w2-anti}.

\medskip
\noindent
If a vertex in $X$ has no neighbours in $V_1 \cup V_2$ then it is an isolated
vertex by Property~\ref{prop:x-wi-anti} and the definition of the set $X$.
In this case we may delete it without affecting the clique-width.
Hence, we may assume without loss of generality that every vertex in $X$ has at least one neighbour in $V_1\cup V_2$.
We partition $X$ into three sets $X_0,X_1,X_2$ as follows.
 Let~$X_1$
(respectively $X_2$) denote the set of vertices in~$X$ with at least one
neighbour in~$V_1$ (respectively $V_2$), but no neighbours in~$V_2$
(respectively $V_1$). Let $X_0$ denote the set of vertices in $X$ adjacent to
at least one vertex of $V_1$ and at least one vertex of $V_2$.

Let $G^*=G[V_1 \cup V_2 \cup W_1 \cup W_2 \cup X_1 \cup X_2]$.
We prove the following additional properties:

\begin{enumerate}[(i)]
\item [\phantomsection\refstepcounter{propctr}\thepropctr\label{prop:g-star-bip}] $G^*$ is bipartite.
\item [\phantomsection\refstepcounter{propctr}\thepropctr\label{prop:x_0-empty}] Without loss of generality $X_0\neq \emptyset$.
\item [\phantomsection\refstepcounter{propctr}\thepropctr\label{prop:v_1-xnbr-v2-comp}] Every vertex in $V_1$ that has a neighbour in $X$ is complete to $V_2$.
\item [\phantomsection\refstepcounter{propctr}\thepropctr\label{prop:v_2-xnbr-v1-comp}]  Every vertex in $V_2$ that has a neighbour in $X$ is complete to $V_1$.
\item [\phantomsection\refstepcounter{propctr}\thepropctr\label{prop:x_0-one-v_i-nbr}] Every vertex in $X_0$ has exactly one neighbour in $V_1$ and exactly one neighbour in $V_2$.
\item [\phantomsection\refstepcounter{propctr}\thepropctr\label{prop:v_i-one-x_0-nbr}] Without loss of generality, every vertex in $V_1\cup V_2$ has at most one neighbour in $X_0$.
\item [\phantomsection\refstepcounter{propctr}\thepropctr\label{prop:v_1-w_2-anti}] Without loss of generality, $V_1$ is anti-complete to $W_2$.
\item [\phantomsection\refstepcounter{propctr}\thepropctr\label{prop:v_2-w_1-anti}] Without loss of generality, $V_2$ is anti-complete to $W_1$.
\end{enumerate}
Property~\ref{prop:g-star-bip} can be seen has follows.
Because $G$ is $(P_2+P_3,C_5)$-free, $G^*$ has no induced odd
cycles of length at least 5. Suppose, for contradiction, that $G^*$ is not
bipartite.  Then it must contain an induced $C_3$.  Now $V_1, V_2, W_1, W_2,
X_1$ and~$X_2$ are independent sets, so at most one vertex of the $C_3$ can be
in any one of these sets. The set $X_1$ is anti-complete to $V_2, W_1, W_2$  and~$X_2$
(by definition of $V_2$ and Properties~\ref{prop:x-indep} and~\ref{prop:x-wi-anti}).  Hence no vertex of the~$C_3$ can be in~$X_1$. Similarly, no vertex of the
$C_3$ be be in $X_2$. The sets $W_1$ and $W_2$ are anti-complete to each other by Property~\ref{prop:w1-w2-anti},
so the $C_3$ must therefore consist of one vertex from each of~$V_1$ and~$V_2$,
along with one vertex from either $W_1$ or $W_2$. However, in this case, these
three vertices, along with either $v_1$ or $v_2$, respectively would induce a
$\overline{2P_1+P_2}$ in $G$, which would be a contradiction.  Hence we have proven Property~\ref{prop:g-star-bip}.

We now prove Property~\ref{prop:x_0-empty}. Suppose $X_0$ is empty.
Then, since $G^*$ is $(P_2+P_3)$-free and bipartite (by Property~\ref{prop:g-star-bip}), it has bounded clique-width by
Lemma~\ref{lem:bipartite}. Hence,~$G$ has bounded clique-width by Fact~\ref{fact:del-vert}, since we
may delete $v_1,v_2,v_3$ and~$v_4$ to obtain $G^*$. This proves Property~\ref{prop:x_0-empty}.

We now prove Property~\ref{prop:v_1-xnbr-v2-comp}.
Let $y_1 \in V_1$ have a neighbour $x \in X$. Suppose, for contradiction, that $y_1$ has a
non-neighbour $y_2 \in V_2$. Then $G[x,y_2,v_1,v_2,y_1]$ is a~$C_5$ if $x$ is
adjacent to~$y_2$ and $G[x,y_1,v_1,y_2,v_3]$ is a $P_2+P_3$ if $x$ is
non-adjacent to $y_2$, a contradiction. This proves Property~\ref{prop:v_1-xnbr-v2-comp}. By symmetry, Property~\ref{prop:v_2-xnbr-v1-comp} holds.

We now prove Property~\ref{prop:x_0-one-v_i-nbr}. By definition, every vertex in $X_0$ has at least one neighbour in $V_1$ and at least one neighbour in $V_2$.
Suppose, for contradiction, that a vertex $x \in X_0$ has two neighbours $y,y' \in V_1$. By definition, $x$ must also have a
neighbour $z \in V_2$. Then $z$ must be adjacent to both
$y$ and $y'$ by Property~\ref{prop:v_2-xnbr-v1-comp}. However, then $G[x,z,y,y']$ is a $\overline{2P_1+P_2}$ by Property~\ref{prop:vi-indep}, a
contradiction.This proves Property~\ref{prop:x_0-one-v_i-nbr}.

We now prove Property~\ref{prop:v_i-one-x_0-nbr}.
Suppose a vertex $y \in V_1$ has two neighbours $x,x' \in X_0$. If there is
another vertex $z \in X_0$ then $z$ must have a unique neighbour~$z'$ in
$V_1$. If $z'$ is a different vertex from $y$ then $G[z,z',x,y,x']$ would be a
$P_2+P_3$ by Properties~\ref{prop:vi-indep} and~\ref{prop:x-indep}.  Thus $z'=y$, that is, every vertex in~$X_0$ must be adjacent to $y$
and to no other vertex of~$V_1$.
By Fact~\ref{fact:del-vert}, we may delete~$y$. In the resulting graph no
vertex of $X$ would have neighbours in both $V_1$ and~$V_2$. So $X_0$ would
become empty, in which case we can argue as
in the proof of Property~\ref{prop:x_0-empty}. This proves Property~\ref{prop:v_i-one-x_0-nbr}.

We now prove Property~\ref{prop:v_1-w_2-anti}.
First, for $i \in \{1,2\}$, suppose that a vertex $y \in V_i$ is adjacent to a vertex $x \in X$. Then $y$ can have at most one non-neighbour in $W_i$. Indeed, suppose
for contradiction that $z,z' \in W_i$ are non-neighbours of~$y$. Then
$G[x,y,z,v_i,z']$ is a $P_2+P_3$ by Properties~\ref{prop:wi-indep} and~\ref{prop:w1-w2-anti}, a contradiction.
We claim that at most one vertex of $W_2$ has a neighbour in $V_1$. Suppose, for contradiction, that $W_2$ contains two vertices $w$ and $w'$
adjacent to (not necessarily distinct) vertices $z$ and $z'$ in $V_1$,
respectively. Since $X_0\neq \emptyset$ by Property~\ref{prop:x_0-empty}, there must be a vertex $y
\in V_2$ with a neighbour in $X_0$. As we just showed that such a vertex~$y$ can have at most one
non-neighbour in $W_2$, we may assume without loss of generality that~$y$ is
adjacent to $w$. Since $y$ has a neighbour in $X$, it must also be adjacent to
$z$ by Property~\ref{prop:v_2-xnbr-v1-comp}. Now $G[w,z,y,v_2]$ is a $\overline{2P_1+P_2}$, which is a contradiction.
Therefore at most one vertex of $W_2$ has a neighbour in~$V_1$ and similarly,
at most one vertex of $W_1$ has a neighbour in $V_2$. By
Fact~\ref{fact:del-vert}, we may delete these vertices if they exist. This proves Properties~\ref{prop:v_1-w_2-anti} and~\ref{prop:v_2-w_1-anti}.

\medskip
\noindent
For $i=1,2$ let $V_i'$ be the set of vertices in $V_i$ that have a neighbour in $X_0$. We
show two more properties:

\begin{enumerate}[(i)]
\item [\phantomsection\refstepcounter{propctr}\thepropctr\label{prop:w_1x_1-unique-v_1-nbr}] Every vertex in $W_1 \cup X_1$ is adjacent to either none, precisely one or all vertices of $V_1'$.
\item [\phantomsection\refstepcounter{propctr}\thepropctr\label{prop:w_2x_2-unique-v_2-nbr}] Every vertex of $W_2 \cup X_2$ is adjacent to either none, precisely one or all vertices of $V_2'$.
\end{enumerate}
We prove Property~\ref{prop:w_1x_1-unique-v_1-nbr} as follows. Suppose a vertex $x\in X_1 \cup W_1$ has at least two neighbours in $z,z' \in V_1$.
We claim that $x$ must be complete to $V_1'$. Suppose, for contradiction, that $x
$ is not adjacent to $y \in
V_1'$. By definition, $y$ has a neighbour $y' \in X_0$. Then $G[y,y',z,x,z']$
is a $P_2+\nobreak P_3$ by Properties~\ref{prop:vi-indep},~\ref{prop:x-indep} and~\ref{prop:x-wi-anti}, a contradiction. This proves Property~\ref{prop:w_1x_1-unique-v_1-nbr}. Property~\ref{prop:w_2x_2-unique-v_2-nbr} follows by symmetry.

\medskip
\noindent
Let $W_i'$ and $X_i'$ be the sets of vertices in $W_i$ and
$X_i$ respectively that are adjacent to precisely one vertex of $V_i'$.
We delete $v_1,v_2,v_3$ and $v_4$, which we may do by Fact~\ref{fact:del-vert}.
We do a bipartite complementation between $V_1'$ and those vertices in $W_1\cup X_1$ that are complete to $V_1'$.
We also do this between $V_2'$ and those vertices in $W_2\cup X_2$ that are complete to $V_2'$.
Finally, we perform a bipartite complementation between $V_1'$ and $V_2\setminus V_2'$ and also between $V_2'$ and $V_1\setminus V_1'$.
We may do all of this by Fact~\ref{fact:bip}.
Afterwards, Properties~\ref{prop:vi-indep}--\ref{prop:w1-w2-anti}, \ref{prop:v_1-xnbr-v2-comp}, \ref{prop:v_2-xnbr-v1-comp}, \ref{prop:v_1-w_2-anti}--\ref{prop:w_2x_2-unique-v_2-nbr} and the definitions of~$V_1'$, $V_2'$, $W_1'$, $W_2'$, $X_1$, $X_2$ imply that there are no edges between
the following two vertex-disjoint graphs:
\begin{enumerate}[1.]
\item $G[W_1' \cup W_2' \cup X_1' \cup X_2' \cup V_1' \cup V_2' \cup X_0]$ and
\item $G \setminus (W_1' \cup W_2' \cup X_1' \cup X_2' \cup V_1' \cup V_2' \cup X_0 \cup
\{v_1,v_2,v_3,v_4\})$
\end{enumerate}
Both of these graphs are induced subgraphs of $G$. The second
of these graphs does not contain any vertices of $X_0$. So it is bipartite by Property~\ref{prop:g-star-bip} and
therefore has bounded clique-width, as argued before
(in the proof of Property~\ref{prop:x_0-empty}).

Now consider the first graph, which is $G[W_1' \cup W_2' \cup X_1' \cup X_2' \cup V_1' \cup
V_2' \cup X_0]$.  By Fact~\ref{fact:bip}, we may complement the edges between
$V_1'$ and $V_2'$. This yields a new graph~$G'$.
By definition of $V_1', V_2'$ and Properties~\ref{prop:v_1-xnbr-v2-comp} and~\ref{prop:v_2-xnbr-v1-comp}, we find that $V_1'$ is anti-complete to $V_2'$ in $G'$.
Hence, by definition of $V_1', V_2'$ and Properties~\ref{prop:vi-indep},~\ref{prop:x-indep}, \ref{prop:x_0-one-v_i-nbr} and~\ref{prop:v_i-one-x_0-nbr},
we find that $G'[V_1' \cup V_2' \cup X_0]$
is a disjoint union of $P_3$'s. For $i \in \{1,2\}$, every vertex in $W_i' \cup
X_i'$ is adjacent to precisely one vertex in $V_i'$ by definition. As the last bipartite complementation operation did not affect these sets, this is still the case in
$G'$.
By Properties~\ref{prop:wi-indep}--\ref{prop:x-wi-anti} and~\ref{prop:w1-w2-anti}, we find that $W_1'\cup W_2'\cup X_0\cup X_1'\cup X_2'$ is an independent set.
Then, by also using Properties~\ref{prop:v_1-w_2-anti} and~\ref{prop:v_2-w_1-anti} together with the definitions of $X_1$ and $X_2$,  we find that
no vertex in $W_i' \cup X_i'$ has any other neighbour in $G'$ besides its neighbour in $V_i'$.
Therefore~$G'$ is a disjoint union of trees and thus has bounded clique-width
by Lemma~\ref{lem:tree}.  We conclude that $G$ has bounded clique-width. This completes the proof of Lemma~\ref{lem:diamond-p2p3-c4}.\qed
\end{proof}

\begin{lemma}\label{lem:diamond-p2p3-c6}
The class of those $(\overline{2P_1+P_2},P_2+P_3,K_5,C_5,C_4)$-free graphs that contain an induced $C_6$ has
bounded clique-width.
\end{lemma}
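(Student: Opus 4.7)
The plan is to take an induced $C_6$ on vertices $v_1,\ldots,v_6$ in cyclic order and analyse how the remaining vertices attach to it, in the same spirit as the proof of Lemma~\ref{lem:diamond-p2p3-c5} for the $C_5$ case. First I would classify $N(x)\cap\{v_1,\ldots,v_6\}$ for each vertex $x$ outside the cycle. Three consecutive cycle neighbours of $x$ induce a $\overline{2P_1+P_2}$ with $x$, and two neighbours at cycle-distance exactly~$2$ together with $x$ and the vertex between them induce either a $C_4$ or (if $x$ is also adjacent to that vertex) a $\overline{2P_1+P_2}$. Hence $N(x)\cap\{v_1,\ldots,v_6\}$ is one of $\emptyset$, $\{v_i\}$, $\{v_i,v_{i+1}\}$ or $\{v_i,v_{i+3}\}$, and I name the corresponding sets $X$, $W_i$, $Y_i$, $Z_i$.

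Next I would remove the sets $Y_i$ and $Z_i$, which are bounded in size. Two vertices of $Z_i$ together with $v_i,v_{i+3}$ induce either $C_4$ or $\overline{2P_1+P_2}$, so $|Z_i|\leq 1$; two non-adjacent vertices of $Y_i$ together with $v_i,v_{i+1}$ induce $\overline{2P_1+P_2}$, so $Y_i$ is a clique, and then $Y_i\cup\{v_i,v_{i+1}\}$ is a clique of size at most $4$ by $K_5$-freeness. Fact~\ref{fact:del-vert} lets me delete all these vertices.

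Then I would establish the following structure on the remaining graph by short obstruction arguments: $X$ and each $W_i$ are independent sets; $X$ is anti-complete to every $W_i$; $W_i$ is anti-complete to $W_j$ whenever $v_i,v_j$ lie at cycle-distance $1$ or $2$; and the edges between $W_i$ and $W_{i+3}$ form a matching. All the anti-completeness claims except the distance-$2$ one follow by exhibiting a $P_2+P_3$ using the forbidden edge and three pairwise adjacent cycle vertices far from it; the distance-$2$ case produces a $C_5$ through the intermediate cycle vertex; and the matching claim follows because a single vertex $w\in W_i$ with two neighbours $w',w''\in W_{i+3}$ would induce a $C_4$ on $\{w,w',v_{i+3},w''\}$. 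Since $X$ then has no neighbours at all, I delete it.

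Finally, I apply Fact~\ref{fact:bip} to perform a bipartite complementation between $\{v_i\}$ and $W_i$ for each $i\in\{1,\ldots,6\}$. This removes every cycle-to-$W_i$ edge, leaving three disjoint pieces: the cycle $C_6$ itself (of clique-width at most $4$ by Lemma~\ref{lem:atmost2}), three bipartite matchings on $W_i\cup W_{i+3}$ for $i=1,2,3$ (each a disjoint union of edges, of clique-width at most $2$), and some isolated vertices. The disjoint union has bounded clique-width, and Fact~\ref{fact:bip} transfers this back to $G$. The step I expect to demand the most care is the adjacency analysis between opposite pairs $W_i$ and $W_{i+3}$: none of the $P_2+P_3$, $C_5$ and $\overline{2P_1+P_2}$ arguments used elsewhere rule out edges there, and it is only $C_4$-freeness that produces the matching structure, so this hypothesis must be applied in exactly the right place.
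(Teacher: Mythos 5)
Your proof is correct, and its first half (classifying the possible cycle-neighbourhoods as $\emptyset$, $\{v_i\}$, $\{v_i,v_{i+1}\}$, $\{v_i,v_{i+3}\}$ and deleting the bounded sets $Y_i$, $Z_i$ via Fact~\ref{fact:del-vert}) matches the paper's argument, which likewise deletes the at most four vertices adjacent to a consecutive pair and the at most one vertex adjacent to any non-adjacent pair. You diverge afterwards: the paper observes that once every outside vertex has at most one cycle neighbour, the sets you call $W_i$ are in fact \emph{empty} -- if $x$ is adjacent to $v_i$ only, then $G[x,v_i,v_{i+2},v_{i+3},v_{i+4}]$ is a $P_2+P_3$ -- and that any two cycle-nonadjacent leftover vertices are non-adjacent for the same reason, so the remaining graph is just a $C_6$ plus isolated vertices and no further machinery is needed. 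You miss this one-line observation and instead carry the $W_i$ through a genuine structural analysis (independence, anti-completeness at distance $1$ and $2$, matchings between opposite pairs) and finish with six bipartite complementations via Fact~\ref{fact:bip} and Lemma~\ref{lem:atmost2}; all of these steps check out, so your route is valid, but the step you single out as the most delicate (the $C_4$-based matching between $W_i$ and $W_{i+3}$) concerns sets that are vacuously empty, so the whole second half can be collapsed. Two small points of hygiene: the phrase ``three pairwise adjacent cycle vertices'' should read ``three consecutive cycle vertices inducing a $P_3$'' (an induced $C_6$ has no triangles, and a $P_3$ is what the $P_2+P_3$ obstruction needs); and since $X$ may be arbitrarily large you cannot delete it by Fact~\ref{fact:del-vert} -- instead note, as you implicitly do in the final decomposition, that isolated vertices do not affect clique-width.
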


\begin{proof}
Let $G$ be a $(\overline{2P_1+P_2},P_2+P_3,K_5,C_5,C_4)$-free graph
containing a $C_6$, say on vertices $v_1,v_2,v_3,v_4,v_5,v_6$ in order.
Let $Y$ be the set of vertices adjacent to $v_1$ and $v_2$ (and possibly other
vertices on the cycle). If $y_1,y_2 \in Y$ are non-adjacent then
$G[v_1,v_2,y_1,y_2]$ would be a $\overline{2P_1+P_2}$.  Therefore $Y$ must be a
clique. Since $G$ is $K_5$-free, $Y$ contains at most four vertices.  Therefore
by Fact~\ref{fact:del-vert} we may assume that no vertex in $G$ has two
consecutive neighbours on the cycle. Suppose there are two vertices $x$ and
$x'$, both of which are adjacent to two non-consecutive vertices of the cycle
$v_i$ and $v_j$. Then if $x$ and $x'$ are adjacent, $G[x,x',v_i,v_j]$ would be
a $\overline{2P_1+P_2}$, otherwise $G[x,v_i,x',v_j]$ would be a $C_4$, a
contradiction. Thus for every two non-adjacent vertices on the cycle, there can
be at most one vertex adjacent to both of them. By Fact~\ref{fact:del-vert} we
may delete all such vertices. We conclude that every other vertex which is not
on the cycle can be adjacent to at most one vertex on the cycle. Suppose $x$ is
adjacent to $v_1$, but not $v_2,v_3,v_4,v_5,v_6$. Then $G[x,v_1,v_3,v_4,v_5]$
would be a $P_2+P_3$. Therefore no vertex which is not on the cycle can have a
neighbour on the cycle. If two vertices $x$ and $x'$ are not adjacent to any
vertex of the cycle then they cannot be adjacent, otherwise
$G[x,x',v_1,v_2,v_3]$ would be a $P_2+P_3$. Therefore the remaining graph is
composed of a $C_6$ and
zero or more isolated vertices. Hence, $G$ has bounded
clique-width. This completes the proof.\qed
\end{proof}
\noindent
We now use Lemmas~\ref{lem:diamond-p2p3-k5}--\ref{lem:diamond-p2p3-c6} and the fact
that $(\overline{2P_1+P_2},P_2+P_3,C_4,C_5,C_6)$-free graphs are chordal graphs, and so have bounded clique-width by Lemma~\ref{lem:diamond-chordal},
to obtain:

\medskip
\noindent
{\bf Theorem~\ref{t-main} (iii).}
{\it The class of $(\overline{2P_1+P_2},P_2+P_3)$-free graphs has bounded clique-width.}

\begin{proof}
Suppose $G$ is a $(\overline{2P_1+P_2},P_2+P_3)$-free graph.
By Lemmas~\ref{lem:diamond-p2p3-k5}--\ref{lem:diamond-p2p3-c6}, we may assume that $G$ is
$(\overline{2P_1+P_2},P_2+P_3,K_5,C_5,C_4,C_6)$-free.
Because $G$ is $(P_2+P_3)$-free, it contains no induced cycles of length 7 or more.
Hence $G$ is chordal, that is, it is a $(\overline{2P_1+P_2})$-free chordal graph, in which case
the clique-width of $G$ is bounded by Lemma~\ref{lem:diamond-chordal}. This completes the proof of the theorem.\qed
\end{proof}

\section{The Proof of Theorem~\ref{t-main} (iv)}\label{s-4}

To prove our fourth main result we need the well-known notion of a {\em wall}. We do not formally define this notion but instead refer to
\figurename~\ref{f-walls}, in which three examples of walls of different height are depicted.

\begin{figure}
\begin{center}
\begin{minipage}{0.2\textwidth}
\centering
\begin{tikzpicture}[scale=0.4, every node/.style={scale=0.3}]
\GraphInit[vstyle=Simple]
\SetVertexSimple[MinSize=6pt]
\Vertex[x=1,y=0]{v10}
\Vertex[x=2,y=0]{v20}
\Vertex[x=3,y=0]{v30}
\Vertex[x=4,y=0]{v40}
\Vertex[x=5,y=0]{v50}

\Vertex[x=0,y=1]{v01}
\Vertex[x=1,y=1]{v11}
\Vertex[x=2,y=1]{v21}
\Vertex[x=3,y=1]{v31}
\Vertex[x=4,y=1]{v41}
\Vertex[x=5,y=1]{v51}

\Vertex[x=0,y=2]{v02}
\Vertex[x=1,y=2]{v12}
\Vertex[x=2,y=2]{v22}
\Vertex[x=3,y=2]{v32}
\Vertex[x=4,y=2]{v42}

\Edges(    v10,v20,v30,v40,v50)
\Edges(v01,v11,v21,v31,v41,v51)
\Edges(v02,v12,v22,v32,v42)

\Edge(v01)(v02)

\Edge(v10)(v11)

\Edge(v21)(v22)

\Edge(v30)(v31)

\Edge(v41)(v42)

\Edge(v50)(v51)

\end{tikzpicture}
\end{minipage}
\begin{minipage}{0.3\textwidth}
\centering
\begin{tikzpicture}[scale=0.4, every node/.style={scale=0.3}]
\GraphInit[vstyle=Simple]
\SetVertexSimple[MinSize=6pt]
\Vertex[x=1,y=0]{v10}
\Vertex[x=2,y=0]{v20}
\Vertex[x=3,y=0]{v30}
\Vertex[x=4,y=0]{v40}
\Vertex[x=5,y=0]{v50}
\Vertex[x=6,y=0]{v60}
\Vertex[x=7,y=0]{v70}

\Vertex[x=0,y=1]{v01}
\Vertex[x=1,y=1]{v11}
\Vertex[x=2,y=1]{v21}
\Vertex[x=3,y=1]{v31}
\Vertex[x=4,y=1]{v41}
\Vertex[x=5,y=1]{v51}
\Vertex[x=6,y=1]{v61}
\Vertex[x=7,y=1]{v71}

\Vertex[x=0,y=2]{v02}
\Vertex[x=1,y=2]{v12}
\Vertex[x=2,y=2]{v22}
\Vertex[x=3,y=2]{v32}
\Vertex[x=4,y=2]{v42}
\Vertex[x=5,y=2]{v52}
\Vertex[x=6,y=2]{v62}
\Vertex[x=7,y=2]{v72}

\Vertex[x=1,y=3]{v13}
\Vertex[x=2,y=3]{v23}
\Vertex[x=3,y=3]{v33}
\Vertex[x=4,y=3]{v43}
\Vertex[x=5,y=3]{v53}
\Vertex[x=6,y=3]{v63}
\Vertex[x=7,y=3]{v73}

\Edges(    v10,v20,v30,v40,v50,v60,v70)
\Edges(v01,v11,v21,v31,v41,v51,v61,v71)
\Edges(v02,v12,v22,v32,v42,v52,v62,v72)
\Edges(    v13,v23,v33,v43,v53,v63,v73)

\Edge(v01)(v02)

\Edge(v10)(v11)
\Edge(v12)(v13)

\Edge(v21)(v22)

\Edge(v30)(v31)
\Edge(v32)(v33)

\Edge(v41)(v42)

\Edge(v50)(v51)
\Edge(v52)(v53)

\Edge(v61)(v62)

\Edge(v70)(v71)
\Edge(v72)(v73)
\end{tikzpicture}
\end{minipage}
\begin{minipage}{0.35\textwidth}
\centering
\begin{tikzpicture}[scale=0.4, every node/.style={scale=0.3}]
\GraphInit[vstyle=Simple]
\SetVertexSimple[MinSize=6pt]
\Vertex[x=1,y=0]{v10}
\Vertex[x=2,y=0]{v20}
\Vertex[x=3,y=0]{v30}
\Vertex[x=4,y=0]{v40}
\Vertex[x=5,y=0]{v50}
\Vertex[x=6,y=0]{v60}
\Vertex[x=7,y=0]{v70}
\Vertex[x=8,y=0]{v80}
\Vertex[x=9,y=0]{v90}

\Vertex[x=0,y=1]{v01}
\Vertex[x=1,y=1]{v11}
\Vertex[x=2,y=1]{v21}
\Vertex[x=3,y=1]{v31}
\Vertex[x=4,y=1]{v41}
\Vertex[x=5,y=1]{v51}
\Vertex[x=6,y=1]{v61}
\Vertex[x=7,y=1]{v71}
\Vertex[x=8,y=1]{v81}
\Vertex[x=9,y=1]{v91}

\Vertex[x=0,y=2]{v02}
\Vertex[x=1,y=2]{v12}
\Vertex[x=2,y=2]{v22}
\Vertex[x=3,y=2]{v32}
\Vertex[x=4,y=2]{v42}
\Vertex[x=5,y=2]{v52}
\Vertex[x=6,y=2]{v62}
\Vertex[x=7,y=2]{v72}
\Vertex[x=8,y=2]{v82}
\Vertex[x=9,y=2]{v92}

\Vertex[x=0,y=3]{v03}
\Vertex[x=1,y=3]{v13}
\Vertex[x=2,y=3]{v23}
\Vertex[x=3,y=3]{v33}
\Vertex[x=4,y=3]{v43}
\Vertex[x=5,y=3]{v53}
\Vertex[x=6,y=3]{v63}
\Vertex[x=7,y=3]{v73}
\Vertex[x=8,y=3]{v83}
\Vertex[x=9,y=3]{v93}

\Vertex[x=0,y=4]{v04}
\Vertex[x=1,y=4]{v14}
\Vertex[x=2,y=4]{v24}
\Vertex[x=3,y=4]{v34}
\Vertex[x=4,y=4]{v44}
\Vertex[x=5,y=4]{v54}
\Vertex[x=6,y=4]{v64}
\Vertex[x=7,y=4]{v74}
\Vertex[x=8,y=4]{v84}

\Edges(    v10,v20,v30,v40,v50,v60,v70,v80,v90)
\Edges(v01,v11,v21,v31,v41,v51,v61,v71,v81,v91)
\Edges(v02,v12,v22,v32,v42,v52,v62,v72,v82,v92)
\Edges(v03,v13,v23,v33,v43,v53,v63,v73,v83,v93)
\Edges(v04,v14,v24,v34,v44,v54,v64,v74,v84)

\Edge(v01)(v02)
\Edge(v03)(v04)

\Edge(v10)(v11)
\Edge(v12)(v13)

\Edge(v21)(v22)
\Edge(v23)(v24)

\Edge(v30)(v31)
\Edge(v32)(v33)

\Edge(v41)(v42)
\Edge(v43)(v44)

\Edge(v50)(v51)
\Edge(v52)(v53)

\Edge(v61)(v62)
\Edge(v63)(v64)

\Edge(v70)(v71)
\Edge(v72)(v73)

\Edge(v81)(v82)
\Edge(v83)(v84)

\Edge(v90)(v91)
\Edge(v92)(v93)
\end{tikzpicture}
\end{minipage}
\caption{Walls of height 2, 3, and 4, respectively.}\label{f-walls}
\end{center}
\end{figure}
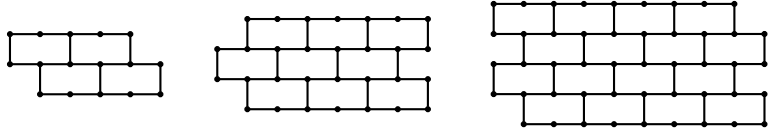

The class of walls is well known to have unbounded clique-width; see for example~\cite{KLM09}. We need a more general result.
The \emph{subdivision} of an edge $uv$ in a graph replaces $uv$ by a new vertex~$w$ with edges $uw$ and~$vw$.
A \emph{$k$-subdivided wall} is a graph obtained from a wall after subdividing each edge
exactly $k$ times for some constant $k\geq 0$. The following lemma is well known.

\begin{lemma}[\cite{LR06}]\label{l-walls}
For any constant $k\geq 0$, the class of $k$-subdivided walls has unbounded clique-width.
\end{lemma}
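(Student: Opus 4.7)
The plan is to reduce to the classical case $k=0$. The first step is to recall that walls themselves have unbounded clique-width, which is a standard result; via the equivalence of bounded clique-width and bounded rank-width~\cite{OS06}, it suffices to exhibit, in the wall $W_n$ of height $n$, a vertex bipartition whose cut-rank (i.e.\ the GF($2$)-rank of the bipartite adjacency matrix between the two sides) grows without bound as $n$ grows, and this is well known for grid-like bipartite graphs.

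The second step is to lift this bound from $W_n$ to $W_n^{(k)}$ for a fixed $k\geq 1$. Fix any vertex bipartition $(A,B)$ of $V(W_n^{(k)})$. Each edge of the underlying $W_n$ is realised in $W_n^{(k)}$ as a path on $k+1$ edges through $k$ interior subdivision vertices, and those interior vertices may lie in $A$ or $B$ in any of at most $2^k$ patterns; combining this with the two possibilities for each of the two endpoints gives only a constant number $c_k$ of combined $A/B$-profiles per edge of $W_n$. A Ramsey-type argument applied to the $\Theta(n^2)$ edges/bricks of $W_n$ then yields a sub-wall whose height is still a growing function of $n$ and on which every edge of the underlying wall has its subdivision path split across $(A,B)$ according to the same profile. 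On this homogeneous sub-wall, the cut-rank across $(A,B)$ agrees, up to a bounded additive term depending only on $k$, with the cut-rank of a corresponding bipartition of an unsubdivided wall of comparable height, and the $k=0$ lower bound from the first step then forces $\cw(W_n^{(k)})\to\infty$.

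The step I expect to be the main obstacle is the Ramsey-style extraction of the homogeneous sub-wall, because one must keep track not only of where the $k$ interior subdivision vertices of each length-$(k+1)$ path land in $(A,B)$ but also of how the path's endpoints (the original degree-$3$ wall vertices) are distributed, and then transfer this combinatorial information to a cut-rank statement. Since the number of combined profiles depends only on $k$, the resulting sub-wall still has height tending to infinity with $n$, and the transfer of cut-rank back to the unsubdivided case is a bounded-rank perturbation, so the unboundedness of clique-width for the class of $k$-subdivided walls follows. This is essentially the route taken in~\cite{LR06}.
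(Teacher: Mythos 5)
You should first note that the paper does not prove this lemma at all: it is quoted with a citation to~\cite{LR06}, so your sketch has to stand on its own, and as written it does not. The central gap is in how you use rank-width. To bound clique-width from below via~\cite{OS06} you must show that \emph{every} branch decomposition contains a cut of large cut-rank, which by the standard balanced-separation argument reduces to showing that every roughly \emph{balanced} vertex bipartition has large cut-rank; exhibiting a single bipartition of large cut-rank (your step 1) proves nothing, since for example $nK_2$ has clique-width~$2$ yet admits a bipartition (one endpoint of each edge on each side) of cut-rank~$n$. Your step 2 then quantifies over an arbitrary bipartition $(A,B)$ of the subdivided wall, but the transfer back to ``a corresponding bipartition of an unsubdivided wall'' needs that induced bipartition to be balanced before the $k=0$ bound can be invoked, and this can fail badly: a balanced bipartition of the $1$-subdivided wall may place all subdivision vertices in $A$ and all branch vertices in $B$, in which case the induced bipartition of the wall is trivial and has cut-rank~$0$, while the actual cut-rank of $(A,B)$ is linear in the number of vertices — so the claimed ``bounded additive perturbation'' between the two cut-ranks is simply false in general, not a routine step.

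The Ramsey extraction is also not sound as described. To compare cut-ranks you need the homogeneous sub-wall to occur as an induced subgraph of $W_n^{(k)}$ (so that its cut matrix is a submatrix of the original and its rank is a lower bound), i.e.\ a \emph{contiguous} sub-wall; but with only a constant number $c_k$ of profiles a checkerboard-style assignment of profiles ensures that no contiguous sub-wall of height at least~$2$ is homogeneous, so no such extraction exists. Choosing widely separated rows and columns instead yields only a topological minor, and cut-rank is not monotone under taking minors, so the lifting step is missing either way. The lemma itself has a short standard proof, which is essentially what the cited literature supplies: a $k$-subdivided wall has maximum degree~$3$ and contains the wall as a (topological) minor, hence its tree-width is at least that of the wall, which grows with the height; since for graph classes of bounded degree (more generally, classes with no large complete bipartite subgraph) bounded clique-width implies bounded tree-width, the class of $k$-subdivided walls has unbounded clique-width. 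Recasting your argument along these lines, rather than via cut-rank homogenization, is the way to repair it.
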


\medskip
\noindent
{\bf Theorem~\ref{t-main} (iv).}
{\it The class of $(\overline{2P_1+P_2},P_2+P_4)$-free graphs has unbounded clique-width.}

\begin{proof}
Let $n \geq 2$ and let $G_n$ be a wall of height $n$. Note that $G_n$ is a
connected bipartite graph. Let $A$ and $C$ be its two bipartition classes. We
subdivide every edge in $G_n$ exactly once to obtain a 1-subdivided wall. Let
$B$ be the set of  new vertices introduced by this operation. We then apply a
bipartite complementation between $A$ and $C$, which results in a graph~$G_n'$.
The set of graphs $\{G_n'\}_{n \geq 2}$ has unbounded clique-width by
Lemma~\ref{l-walls} and Fact~\ref{fact:bip}. Hence it suffices to prove
that~$G_n'$ is $(\overline{2P_1+P_2},P_2+P_4)$-free.  We do this using three
observations.
\begin{enumerate}[(i)]
\item \label{proof-observations} $A$ and $C$ are independent sets in $G_n'$ that are complete to
each-other, in other words, $G_n'[A\cup C]$ is a complete bipartite graph.
\item $B$ is an independent set and every vertex of $B$ has exactly one
neighbour in $A$ and exactly one neighbour in $C$.
\item No two vertices of~$B$ have the same neighbourhood.
\end{enumerate}

We now prove that $G_n'$ is $\overline{2P_1+P_2}$-free. For contradiction,
suppose that $G_n'$ contains an induced subgraph $H$ isomorphic to
$\overline{2P_1+P_2}$.  Since $G_n'[A \cup C]$ is complete bipartite, any
triangle in $G_n'$ must contain a vertex of $B$. Since the vertices of $B$ have
degree~2, this means that the two degree-2 vertices of $H$ must be in~$B$. As
$G_n'[A\cup C]$ is complete bipartite, one of the degree-3 vertices of~$H$ is
in~$A$ and the other one is in~$C$.  This implies that the two degree-2
vertices in $H$ have the same neighbourhood. Since both of these vertices
belong to~$B$, this is a contradiction.

It remains to prove that $G_n'$ is $(P_2+P_4)$-free.  For contradiction,
suppose that~$G_n'$ contains an induced subgraph $H$ isomorphic to $P_2+P_4$.
Let $H_1$ and $H_2$ be the connected components of $H$ isomorphic to $P_2$ and
$P_4$, respectively.  Since $G_n'[A\cup C]$ is complete bipartite, $H_2$ must
contain at least one vertex of~$B$. Since the two neighbours of any vertex of
$B$ are adjacent,  any vertex of $B$ in~$H_2$ must be an end-vertex of $H_2$.
Then, as $A$ and $C$ are independent sets, $H_2$ contains a vertex of both $A$
and $C$. As $H_1$ can contain at most one vertex of $B$ (because $B$ is an
independent set), $H_1$ contains a vertex~$u\in A\cup C$. However, $G_n'[A\cup
C]$ is complete bipartite and $H_2$ contains a vertex of both $A$ and $C$.
Hence,~$u$ has a neighbour in $H_2$, which is not possible.  This completes the
proof of Theorem~\ref{t-main}~(iv).\qed
\end{proof}

We finish this section with one more result.  A {\it dominating} vertex in a
graph~$G$ is a vertex adjacent to all other vertices of $G$.  We need the
following two well-known observations (see e.g.~\cite{KS12}).

\begin{lemma}\label{l-universal}
Let $G_1'$ and $G_2'$ be the graphs obtained from two graphs $G_1$ and $G_2$,
respectively, by adding a dominating vertex. Then $G_1'$ and $G_2'$ are
isomorphic if and only if $G_1$ and $G_2$ are.
\end{lemma}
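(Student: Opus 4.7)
\medskip
\noindent
\textbf{Proof proposal for Lemma~\ref{l-universal}.}
The plan is to prove the two directions separately. The ``if'' direction is essentially trivial: given an isomorphism $\phi : V(G_1) \to V(G_2)$, I would extend it to $\phi' : V(G_1') \to V(G_2')$ by sending the dominating vertex added to $G_1$ to the dominating vertex added to $G_2$. Since each added vertex is adjacent to every other vertex of its graph, no edge constraints can fail, so $\phi'$ is an isomorphism.

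For the ``only if'' direction, suppose $\psi : V(G_1') \to V(G_2')$ is an isomorphism, and let $v_i$ denote the dominating vertex that was added to $G_i$ to form $G_i'$. The subtle point is that $\psi$ need not map $v_1$ to $v_2$: the graphs $G_1, G_2$ may themselves already contain dominating vertices, and these become additional dominating vertices of $G_1', G_2'$. So I would first set $u := \psi(v_1) \in V(G_2')$; since $v_1$ is dominating in $G_1'$, its image $u$ is dominating in $G_2'$. If $u = v_2$, the restriction $\psi|_{V(G_1)}$ is an isomorphism from $G_1$ to $G_2$ and we are done.

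Otherwise $u \in V(G_2)$, and the key observation is that $u$ must then be a dominating vertex of $G_2$ as well (its adjacency with all of $V(G_2) \setminus \{u\}$ in $G_2'$ has nothing to do with $v_2$). Now $\psi|_{V(G_1)}$ is an isomorphism from $G_1$ to $G_2'[V(G_2') \setminus \{u\}]$, so it suffices to check that this last graph is isomorphic to $G_2$. I would argue this by noting that $G_2'[V(G_2') \setminus \{u\}]$ is precisely the graph $G_2 - u$ with a new dominating vertex $v_2$ attached, while $G_2$ itself is $G_2 - u$ with the dominating vertex $u$ attached; both are obtained from $G_2 - u$ by the same ``add a dominating vertex'' operation, hence are isomorphic.

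The main (minor) obstacle is precisely the case $u \neq v_2$ handled in the previous paragraph; once one recognises that adding a dominating vertex to a fixed graph always yields the same graph up to isomorphism, the argument collapses into a short chain of identifications. No technical tools beyond bookkeeping of degrees and dominating vertices are needed.
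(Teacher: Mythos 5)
Your proof is correct: the ``if'' direction is immediate, and in the ``only if'' direction you rightly identify the only delicate point, namely that the isomorphism $\psi$ need not send the added dominating vertex $v_1$ to $v_2$, and your resolution (observing that $\psi(v_1)=u$ is then a dominating vertex of $G_2$, so that $G_2'\setminus\{u\}$ and $G_2$ are both obtained from $G_2-u$ by adding a dominating vertex and hence isomorphic) is sound. Note that the paper itself gives no proof of Lemma~\ref{l-universal}; it is cited as a well-known observation, and your argument is exactly the standard one that justifies it, so there is nothing in the paper to diverge from.
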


\begin{lemma}\label{l-subdivide}
Let $G_1'$ and $G_2'$ be the graphs obtained from subdividing every edge of two
graphs $G_1$ and $G_2$, respectively, exactly once.  Then $G_1'$ and $G_2'$ are isomorphic if
and only if $G_1$ and $G_2$ are.
\end{lemma}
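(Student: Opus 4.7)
The plan is to prove the two directions separately. The forward implication is routine: given an isomorphism $\psi\colon G_1\to G_2$, extend $\psi$ to a bijection $\phi\colon V(G_1')\to V(G_2')$ by sending the subdivision vertex inserted on each edge $uv\in E(G_1)$ to the subdivision vertex inserted on $\psi(u)\psi(v)\in E(G_2)$, and then unpack the definitions to verify that $\phi$ preserves adjacency. All of the work is in the reverse direction.

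Let $B_i$ denote the set of subdivision vertices of $G_i'$, so that $V(G_i')=V(G_i)\cup B_i$. The key structural observation I would use is that $G_i'$ is bipartite with bipartition classes $V(G_i)$ and $B_i$; every vertex of $B_i$ has degree exactly $2$ in $G_i'$; and two vertices $u,v\in V(G_i)$ are adjacent in $G_i$ if and only if they share a common neighbour in $G_i'$. Once I establish that a given isomorphism $\phi\colon G_1'\to G_2'$ can be chosen to satisfy $\phi(V(G_1))=V(G_2)$ and $\phi(B_1)=B_2$, the desired isomorphism $G_1\to G_2$ will simply be the restriction $\phi|_{V(G_1)}$, and the common-neighbour criterion will immediately guarantee that it preserves adjacency.

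To justify this choice of $\phi$, I would proceed component by component. For a connected component $K$ of $G_1'$ that contains a vertex of degree different from $2$, such a vertex must lie in $V(G_1)\cap K$, which pins down the bipartition of $K$ uniquely, and then $\phi$ is forced to respect the $V$/$B$ labelling on $K$ and on its image. The only delicate case --- and the place where the bipartition is genuinely ambiguous --- is when every vertex of $K$ has degree $2$, so that $K$ is a cycle $C_{2k}$ (necessarily with $k\geq 3$, since no simple graph has a $1$-subdivision isomorphic to $C_2$ or $C_4$). In that situation both alternating classes of $K$ play symmetric roles, and $K$ is the $1$-subdivision of $C_k$ no matter which class is designated as the original vertices; so one can compose $\phi$ on this component with an automorphism of $\phi(K)$ that swaps its two bipartition classes without changing anything else. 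Performing this adjustment on every ambiguous component yields an isomorphism with $\phi(V(G_1))=V(G_2)$, after which the common-neighbour argument finishes the proof. The main obstacle is precisely this ambiguous-cycle case, but the preceding remark shows that it is harmless.
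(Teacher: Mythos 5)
Your proof is correct. Note that the paper itself does not prove this lemma at all: it states it (together with Lemma~\ref{l-universal}) as a well-known observation and simply cites~\cite{KS12}, so there is no in-paper argument to compare against. Your write-up supplies exactly the standard argument one would give: the bipartition of each component of the subdivision into branch vertices and subdivision vertices is forced whenever the component contains a vertex of degree other than~$2$ (such a vertex must be a branch vertex, and connected bipartite graphs have a unique bipartition), the only ambiguous components are even cycles $C_{2k}$ with $k\geq 3$, and there the class-swapping automorphism (rotation by one) lets you normalise the isomorphism componentwise; afterwards the restriction to the branch vertices is an isomorphism of the original graphs because adjacency in $G_i$ is equivalent to having a common neighbour in $G_i'$. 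This correctly isolates and resolves the one genuinely delicate point, so the proposal is a complete, self-contained proof of the cited fact.
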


\begin{theorem}\label{t-iso}
{\sc Graph Isomorphism} is {\sc Graph Isomorphism}-complete for the class of
$(\overline{2P_1+P_2},P_2+P_4)$-free graphs.
\end{theorem}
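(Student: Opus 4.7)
The plan is to polynomially reduce \textsc{Graph Isomorphism} on connected bipartite graphs---a well-known \textsc{Graph Isomorphism}-complete class---to \textsc{Graph Isomorphism} on $(\overline{2P_1+P_2},P_2+P_4)$-free graphs, by reusing the construction from the proof of Theorem~\ref{t-main}(iv). Given a connected bipartite graph $G$ with its (unique) bipartition $(A,C)$, I would first attach a small canonical gadget if necessary so that $|A|,|C|\geq 3$ (this preserves the isomorphism type of the input). I would then define $G^{\dagger}$ by subdividing every edge of $G$ exactly once, introducing a vertex $b_e$ for each edge $e=ac$ and collecting these into a set $B$, and afterwards performing a bipartite complementation between $A$ and $C$. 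The three structural properties listed in the proof of Theorem~\ref{t-main}(iv) (that $G^{\dagger}[A\cup C]=K_{|A|,|C|}$; that $B$ is an independent set in which each $b_e$ has exactly one neighbour in $A$ and one in $C$; and that distinct vertices of $B$ have distinct neighbourhoods) all hold, so $G^{\dagger}$ is $(\overline{2P_1+P_2},P_2+P_4)$-free. Since the construction is polynomial, it remains to prove that $G_1\cong G_2$ if and only if $G_1^{\dagger}\cong G_2^{\dagger}$.

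The forward direction is routine: an isomorphism $\psi\colon G_1\to G_2$ extends via $b_e\mapsto b_{\psi(e)}$. For the converse, I take an isomorphism $\phi\colon G_1^{\dagger}\to G_2^{\dagger}$ and write $A_i,B_i,C_i$ for the corresponding sets. Every $B_i$-vertex has degree exactly $2$ in $G_i^{\dagger}$, whereas every vertex of $A_i\cup C_i$ has degree at least $\min(|A_i|,|C_i|)\geq 3$, coming just from the complete bipartite graph $G_i^{\dagger}[A_i\cup C_i]$; hence $\phi(B_1)=B_2$ and $\phi(A_1\cup C_1)=A_2\cup C_2$. Since $G_i^{\dagger}[A_i\cup C_i]=K_{|A_i|,|C_i|}$, the restriction of $\phi$ to $A_1\cup C_1$ is an isomorphism of complete bipartite graphs and must map bipartition classes to bipartition classes, so after possibly swapping the labels of $A_2$ and $C_2$ I may assume $\phi(A_1)=A_2$ and $\phi(C_1)=C_2$. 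Finally, each $b\in B_1$ with neighbours $\{a,c\}\subseteq A_1\cup C_1$ corresponds to the edge $ac$ of $G_1$, and its image $\phi(b)\in B_2$ with neighbours $\{\phi(a),\phi(c)\}$ corresponds to the edge $\phi(a)\phi(c)$ of $G_2$; so $\phi|_{A_1\cup C_1}$ is an isomorphism $G_1\to G_2$.

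The main obstacle is that neither edge subdivision nor bipartite complementation is, on its own, obviously an isomorphism-preserving operation on the \emph{output} class: one has to be able to undo them given only the unlabelled graph $G^{\dagger}$. The degree argument above does exactly this, provided $|A|,|C|\geq 3$, which is why the padding step is included; Lemmas~\ref{l-universal} and~\ref{l-subdivide} are the general tools guaranteeing that such preparatory padding, as well as the subdivision step built into the construction, are safe with respect to \textsc{Graph Isomorphism}.
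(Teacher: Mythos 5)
Your reduction is correct in substance and uses the same core mechanism as the paper: construct a graph consisting of a complete bipartite part on $A\cup C$ together with an independent set $B$ of degree-2 vertices having pairwise distinct neighbourhoods (the three observations from the proof of Theorem~\ref{t-main}~(iv)), so that $(\overline{2P_1+P_2},P_2+P_4)$-freeness follows verbatim, and then unscramble the construction by identifying $B$ as the set of degree-2 vertices. The difference is in the front end. The paper reduces from \textsc{Graph Isomorphism} on arbitrary graphs, using four dominating vertices plus \emph{two} rounds of subdivision, and therefore needs a finer invariant (``adjacent to at least three vertices of degree two'') to distinguish $A_i$ from $C_i$ before undoing the bipartite complementation. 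You instead reduce from \textsc{Graph Isomorphism} on connected bipartite graphs (a GI-complete class, obtainable from the paper's own Lemmas~\ref{l-universal} and~\ref{l-subdivide} exactly as you indicate), which allows a single subdivision and no dominating vertices; moreover, you do not actually need the step forcing $\phi(A_1)=A_2$: once degrees force $\phi(B_1)=B_2$, the edges of $G_i$ are precisely the pairs in $A_i\cup C_i$ sharing a degree-2 common neighbour, and this is preserved by $\phi$, so $\phi$ restricted to $A_1\cup C_1$ is automatically an isomorphism $G_1\to G_2$. Your route buys a shorter construction at the price of importing the bipartite GI-completeness fact; the paper's construction is self-contained from arbitrary input graphs.

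The one soft spot is the padding step: ``attach a small canonical gadget \dots{} this preserves the isomorphism type'' is not justified as stated, since an arbitrary gadget attachment need not be invertible from the unlabelled output graph. This is easily repaired: either observe that connected bipartite graphs with a colour class of size at most two admit trivial polynomial-time isomorphism testing, so such instances can be excluded from the source class, or note that the standard reduction producing connected bipartite instances (add a dominating vertex, then subdivide every edge, i.e.\ Lemmas~\ref{l-universal} and~\ref{l-subdivide}) already yields both colour classes of size at least three. With that detail fixed, your proof is sound.
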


\begin{proof}
Let $G_1$ and $G_2$ be arbitrary graphs.  For $i=1,2$ we modify $G_i$ as
follows.  First, add four dominating vertices. (Note that these added vertices
are pairwise adjacent.) This ensures that the graph has minimum degree at
least~3.  Let $A_i$ be the set of vertices in the resulting graph. Subdivide
every edge once and let $C_i$ be the set of new vertices. Note that this
results in a bipartite graph with bipartition classes $A_i$ and $C_i$.
Subdivide each edge in this modified graph and let $B_i$ be the set of new
vertices. Call the resulting graph $G_i'$. Finally, apply a bipartite
complementation between $A_i$ and $C_i$. Let $G_i''$ be the resulting graph.
Now in the graph $G_i''$, the sets of vertices $A_i,B_i$ and $C_i$ satisfy the
three \hyperref[proof-observations]{observations} from the proof of
Theorem~\ref{t-main}~(iv) and $G_1''$ and $G_2''$ are therefore
$(\overline{2P_1+P_2},P_2+P_4)$-free by exactly the same arguments.

We claim that $G_1$ and $G_2$ are isomorphic if and only if $G_1''$ and $G_2''$
are. In order to see this, we first use Lemmas~\ref{l-universal}
and~\ref{l-subdivide} to deduce that $G_1$ and $G_2$ are isomorphic if and only
if $G_1'$ and $G_2'$ are. It remains to show that $G_1'$ and $G_2'$ are
isomorphic if and only if $G_1''$ and $G_2''$ are. Note that for $i=1,2$, every
vertex in $A_i$ has degree at least~3 in both $G_i'$ and $G_i''$, every vertex
of $B_i$ has degree exactly~2 in both $G_i'$ and $G_i''$ and every vertex of
$C_i$ has degree exactly~2 in $G_i'$ and degree at least~3 in $G_i''$.
Now a vertex is in $B_i$ if and only if it is adjacent to a vertex of degree at
least~3 in $G_i'$ if and only if it is of degree exactly~2 in $G_i''$. A vertex
in $G_i'$ or $G_i''$ is in $A_i$ if and only if it is adjacent to at least
three vertices of degree~2.  Hence, every isomorphism from $G_1'$ to $G_2'$ and
every isomorphism from~$G_1''$ and $G_2''$ maps the vertices of $A_1, B_1$ and
$C_1$ to the vertices of $A_2, B_2$ and $C_2$, respectively.  The claim follows
since for $i=1,2$ the graph $G_i''$ is obtained from~$G_i'$ by adding all edges
between $A_i$ and $C_i$.\qed
\end{proof}

\bibliographystyle{abbrv}

\bibliography{mybib}

\end{document}